%\documentclass{aims}
%this draft on April 24, 2009, to post on website and possibly on arxiv

\documentclass{amsart}

\usepackage{amsmath, amssymb}
  \usepackage{paralist}
  \usepackage{graphicx}
  \usepackage{epsfig}
  \usepackage{epstopdf}

 \usepackage[colorlinks=true]{hyperref}

  \textheight=8.2 true in
   \textwidth=5.0 true in
    \topmargin 30pt
     \setcounter{page}{1}

\newtheorem{theorem}{Theorem}[section]

\newtheorem{lemma}[theorem]{Lemma}
\newtheorem{proposition}{Proposition}

\theoremstyle{definition}

\newcommand{\ep}{\varepsilon}

\newcommand{\abs}[1]{\lvert #1 \rvert}
\newcommand{\bigabs}[1]{\left\lvert #1 \right\rvert}
\newcommand{\lrpar}[1]{\left( #1 \right)}

\numberwithin{equation}{section}
\numberwithin{theorem}{section}

\title[Landau equation and weak coupling limit]
	{Rigorous derivation of the Landau equation in the weak coupling limit}

\author[Kay Kirkpatrick]{}

\subjclass{Primary: 82B40, 82D10; Secondary: 60K35}
\keywords{Kinetic theory, particle systems, plasma models}

\email{kay@math.mit.edu} 
\thanks{This work was supported by an NSF Graduate Research Fellowship and an AAUW American Dissertation Fellowship.}

\begin{document}

\maketitle

% \centerline{(Communicated by the associate editor name)}

\centerline{\scshape Kay Kirkpatrick	 }
\medskip
{\footnotesize
 \centerline{Massachusetts Institute of Technology}
   \centerline{77 Mass. Ave., Cambridge, MA 02139, USA}
}

\medskip

\begin{abstract}

We examine a family of microscopic models of plasmas, with a parameter $\alpha$ comparing the typical distance between collisions to the strength of the grazing collisions. These microscopic models converge in distribution, in the weak coupling limit, to a velocity diffusion described by the linear Landau equation (also known as the Fokker-Planck equation). The present work extends and unifies previous results that handled the extremes of the parameter $\alpha$, for the whole range $(0, 1/2]$, by showing that clusters of overlapping obstacles are negligible in the limit. Additionally, we study the diffusion coefficient of the Landau equation and show it to be independent of the parameter.

\end{abstract}

\section{Introduction}

Particles in a plasma experience grazing collisions because they are ionized, interacting even at long distances as described by the Coulomb potential. So far, the full Coulomb model has been impossible to handle rigorously in a scaling limit (see \cite{L} for a heuristic argument, and \cite{DP} for a rigorous partial result), and the strategy has been to use an approximation by soft-sphere models with their bump-function potentials.

Microscopically, soft-sphere models consist of a lightweight particle traveling through a random configuration of large stationary particles, called ``obstacles" or ``scatterers," whose shape and density are determined by a parameter. The lightweight particle grazes obstacles when it gets within their ranges of influence, called ``protection" disks. In two dimensions, this can be visualized as a ball rolling through a random field of hills. It is desirable to understand these microscopic models in the weak coupling limit, when the radius of the obstacles goes to zero, and to derive rigorously a macroscopic description in terms of a linear PDE.
  
% soft sphere model
\begin{figure}[htpb]
      \centering
      \includegraphics[viewport=0in 0in 5in 3in, keepaspectratio, width=3.5in,clip]{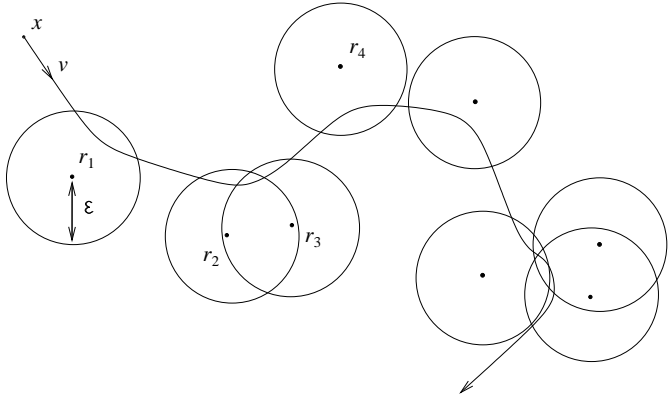}
      \caption{Example of the soft-sphere model: a configuration of obstacles and the corresponding trajectory of the light particle.}
      \label{softspheres}
\end{figure}

More precisely, we introduce a parameter $\alpha \in (0, 1/2]$.  The $i$-th obstacle, centered at $r_i$ (see figure \ref{softspheres}), is described by a suitably smooth and compactly supported (say, in the unit ball) radial potential $V$ whose rescaling is:   
\begin{equation*}
V_{\alpha}^\ep(x-r_i) = \ep^{\alpha} V( |x-r_i|/\ep ).
\end{equation*}
We use the convention of \cite{DR}; another convention for rescaling has $\ep^2$, as in \cite{DGL}.

Then the microscopic dynamics for an obstacle configuration, $\omega$, are Newtonian: 
\begin{equation}\label{Newton}
 \left\{ 
\begin{array}{c}
	\dot{x_\alpha^\ep} = v_\alpha^\ep,   \quad\quad\quad\quad\quad\quad\quad\quad\quad\quad\quad\quad\quad\quad x_\alpha^\ep(0) = x,\\
	 \dot{v_\alpha^\ep} =  - {\displaystyle \sum_{r_i \in \omega} \nabla V_\alpha^\ep (x_\alpha^\ep - r_i) }, 
	\quad\quad\quad\quad\quad\quad\;\, v_\alpha^\ep(0)= v.
\end{array}
\right.
\end{equation}

The collection of obstacle centers, $\omega := \{ r_i : i \in \mathbb{Z} \}$, is a realization of the Poisson point process in $\mathbb{R}^2$ with intensity $\rho_\alpha^\ep := \ep^{-2\alpha- 1} \rho$, e.g., the expected number of obstacles in $A \subset \mathbb{R}^2$ is $\mathbb{E}^\ep (N (A)) = \rho \abs{A} \ep^{-2\alpha- 1}.$

The case $\alpha = 0$, where the particle travels a relatively long distance between collisions and each obstacle has a relatively large influence, corresponds to the Boltzmann-Grad limit of the hard-sphere model of a dilute (or Lorenz) gas, with the macroscopic evolution given by the Boltzmann equation. (See, for instance, \cite{Spohn} and \cite{BBS}; and for the quantum Lorenz gas, \cite{EY1}.)

For $\alpha \in (0,1/2]$, the number of obstacles must be about $\ep^{-2\alpha- d + 1}$ per unit of volume in order to have a net effect that is nonzero and finite (this is more than in the Boltzmann-Grad limit, where the number is about $\ep^{1-d}$). This large number of obstacles balances the small factor of $\ep^{\alpha}$ in the rescaled potential, and as $\ep \to 0$ in this, the weak coupling limit, the macroscopic dynamics of these plasma models are given by the linear Landau equation.

%We desire to study the stochastic process $(x_\alpha^\ep (t), v_\alpha^\ep (t))_{t \geq 0}$ as $\ep$ goes to zero, for each fixed $\alpha \in (0,1/2)$. This process converges in law to a diffusion in the velocity variable, which gives the position when integrated. The following theorem states this precisely.

Previously, the results for this family of models were incomplete. Desvillettes and Ricci proved a weak version (convergence in expectation) of the two-dimensional weak coupling limit for $0 < \alpha< 1/8$ \cite{DR}. They approximated the Landau equation by the Boltzmann equation in order to use a modification of Gallavotti's technique for the Boltzmann-Grad limit.  From the outset, however, they needed $\alpha$ to be small, so that the radius of each obstacle is much smaller than the expected free flight time. The method is further limited to the regime $0 < \alpha< 1/8$ by an estimate of the probability of self-intersection, but can be improved to include $0 < \alpha< 1/4$ (see Appendix).  

On the other hand, Kesten and Papanicolaou proved a stronger convergence (in law) of the weak coupling limit at the upper endpoint of the parameter range, $\alpha = 1/2$, for dimensions three and higher and with a general random field, called the stochastic acceleration problem \cite{KestPapSA}. Then D\"urr, Goldstein, and Lebowitz proved a two-dimensional version, with the Poisson distribution of obstacles \cite{DGL}. More recently, Komorowski and Ryzhik handled the stochastic acceleration problem in two dimensions \cite{KomRyzSA}. (See also \cite{EY2} and \cite{HLW} for the quantum weak coupling limit.)

The difficulty in the middle range of $\alpha$ is that obstacles overlap a great deal more than for small $\alpha$, but their individual influence is greater than for $\alpha = 1/2$. There's an additional difficulty in the two-dimensional case because the probability of self-intersections is nontrivial. Fortunately, it turns out that ``bad" self-intersections (ones that are repeated or almost tangential) are negligible in the scaling limit. This is important because such self-intersections cause a correlation between the past and the present (see Figures \ref{almosttang} and \ref{manyselfint}), and controlling memory effects is the main difficulty in these problems.  In higher dimensions, this is unnecessary, because the probability of any self-intersection is negligible in the limit.

% almost tangent self-intersection
\vspace{-0.25in}
\begin{figure}[htpb]
      \centering
      \includegraphics[viewport=0in 0in 4.5in 3.6in, keepaspectratio, width=3.0in,clip]{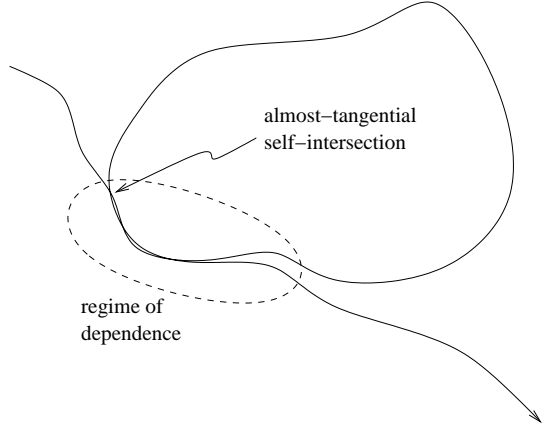}
      \caption{Correlation within a trajectory due to a small-angle self-intersection.}
      \label{almosttang}
\end{figure}

% many self-intersections
\vspace{-0.35in}
\begin{figure}[htpb]
      \centering
      \includegraphics[viewport=0in 0in 4in 2.5in, keepaspectratio, width=2.9in,clip]{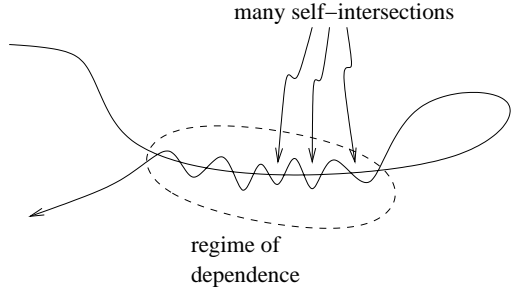}
      \caption{Correlation within a trajectory due to many self-intersections.}
      \label{manyselfint}
\end{figure}

\begin{theorem}
For $\alpha \in (0, 1/2)$, the family of stochastic processes $(v_\alpha^\ep(t))_{t \geq 0}$ converges as $\ep \to 0$ in law to the velocity diffusion $(v(t))_{t\geq0}$ generated by $\Delta_v$, the Laplace-Beltrami operator on $S^1 := \{ w \in \mathbb{R}^2 : \abs{w} = \abs{v_0} \}$.   

In particular, if $f_0(x,v)$ is an initial distribution of positions and velocities, and $\Phi^{t}_{\alpha, \omega, \ep}$ is the flow for the microscopic dynamics \eqref{Newton}, then
\begin{equation}
f^\ep_\alpha(t,x,v) = \mathbb{E} [ f_0(\Phi^{t}_{\alpha, \omega, \ep} (x,v)) ] \xrightarrow{\ep \to 0} h(t,x,v),
\end{equation}
where $h$ is the solution of the linear Landau equation:
\begin{equation}\label{LinearLandau}
\left\{
\begin{aligned}
	(\partial_t + v \cdot \nabla_x) h(t,x,v) & = \zeta \Delta_v h(t,x,v); \\
	h(0,x,v) & = f_0(x,v).
\end{aligned}
\right.
\end{equation}
\end{theorem}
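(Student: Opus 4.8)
The plan is to work entirely at the level of the velocity process. Conservation of energy in \eqref{Newton} confines $v_\alpha^\ep(t)$ to the circle $S^1$ of radius $\abs{v_0}$, so it suffices to prove that $(v_\alpha^\ep(t))_{t\geq0}$ converges in law on $C([0,\infty);S^1)$ to the diffusion generated by $\zeta\Delta_v$ and then to transfer this to $f_\alpha^\ep$. The transfer is immediate: since $x_\alpha^\ep(t)=x+\int_0^t v_\alpha^\ep(s)\,ds$, joint convergence $(x_\alpha^\ep,v_\alpha^\ep)\Rightarrow(x,v)$ follows by the continuous-mapping theorem, and pairing with the bounded continuous datum $f_0$ gives $f_\alpha^\ep(t,x,v)=\mathbb{E}[f_0(\Phi^{t}_{\alpha,\omega,\ep}(x,v))]\to\mathbb{E}[f_0(x-\int_0^t v(s)\,ds,\,v(t))]=h(t,x,v)$, the last equality being the probabilistic representation of the solution of \eqref{LinearLandau}. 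Convergence in law I would obtain in the usual two steps: tightness, plus identification of every subsequential limit through the martingale problem for $\zeta\Delta_v$ on $S^1$, which is well posed.

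The engine of the identification step is a Duhamel/Gallavotti-type expansion: conditioning on the Poisson field $\omega$, one writes the trajectory as an alternation of free flights and interaction episodes, organizes the episodes into a history indexed by the obstacles actually met, and averages over $\omega$ with the Poisson moment formulas. Against the true expansion I would compare an idealized, genuinely Markovian velocity process $\hat v^\ep$ in which each successive interaction is with a \emph{fresh} obstacle -- impact parameter uniform, drawn independently of the past -- so that recollisions and returns to previously visited overlap regions are removed by construction. The collision rate along a straight ray is the cross-section $\sim\ep$ times the density $\ep^{-2\alpha-1}$, i.e.\ of order $\rho\abs{v}\ep^{-2\alpha}$; a single grazing collision rotates $v$ along $S^1$ by an arc of order $\ep^{\alpha}$ whose mean is of lower order (the first-order deflection being odd in the impact parameter) and whose variance is of order $\ep^{2\alpha}$. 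Hence the arc-length increments of $\hat v^\ep$ accumulate an $O(1)$ quadratic variation over each unit time interval while all higher cumulants vanish, and a standard diffusion approximation for Markov chains gives $\hat v^\ep\Rightarrow$ Brownian motion on $S^1$, i.e.\ the diffusion generated by $\zeta\Delta_v$. Two checks are needed: that the higher cumulants really are negligible (so the limit is a diffusion, not a more general jump process), and that after the $\alpha$-dependent rescaling the constant $\zeta$ -- an explicit integral of the scattering data of $V$ -- carries no residual $\ep$ or $\alpha$, which is the asserted independence of the diffusion coefficient.

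The crux, and the genuinely new part, is showing that the true process and the idealized process have the same limit, i.e.\ that the ``memory'' contributions are $o(1)$ \emph{uniformly in} $\alpha\in(0,1/2)$. There are two sources of memory. First, self-intersections of the trajectory: a chord that revisits an obstacle correlates the present with the past, and although in two dimensions self-intersections are not rare, only the ``bad'' ones -- repeated, or making a small angle (Figures \ref{almosttang} and \ref{manyselfint}) -- actually obstruct the comparison, and the set of chords producing such intersections is thin enough that its probability is a positive power of $\ep$. Second, and this is the obstacle that kept earlier work below $\alpha=1/8$ (or $1/4$; see Appendix), as $\alpha\uparrow 1/2$ the expected number of obstacles whose protection disks lie within one interaction length grows without bound, so the particle can feel many obstacles at once and the clean single-collision picture can fail. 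I would control this by defining a ``good'' event on which, near the trajectory, overlapping obstacles form only clusters of bounded size and total diameter $\ll 1$, bound the probability of the complementary ``bad'' cluster event by a power of $\ep$ via the Poisson moment formula and packing geometry, and show that on the good event each cluster acts, to leading order, as a single effective grazing collision whose cross-section and deflection law converge to those of the idealized collision. On the good event the two expansions then agree term by term, with errors summable in the number of collisions and $o(1)$.

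Finally, tightness of $(v_\alpha^\ep)$ follows from the a priori bound $\abs{v_\alpha^\ep(t)}=\abs{v_0}$ together with an equicontinuity estimate: over a time interval of length $\delta$ the velocity moves by at most (number of collisions in the interval)$\times O(\ep^{\alpha})$, and that number is of order $\ep^{-2\alpha}\delta$ with Poisson-type fluctuations, so the modulus of continuity is controlled in probability. Combining tightness with the martingale identification -- for $\phi\in C^2(S^1)$ the process $\phi(v_\alpha^\ep(t))-\phi(v_\alpha^\ep(0))-\int_0^t\zeta\Delta_v\phi(v_\alpha^\ep(s))\,ds$ is, up to the memory errors just bounded, a martingale -- forces every subsequential limit to solve the martingale problem for $\zeta\Delta_v$, yielding the claimed convergence in law and hence the theorem. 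I expect the cluster estimate in the third paragraph to be the main difficulty: it is where the whole range $(0,1/2)$ is won, and it demands quantitative control of the Poisson geometry of overlapping disks along a typical trajectory that neither extreme $\alpha\approx 0$ nor $\alpha=1/2$ required.
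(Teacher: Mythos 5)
Your high-level architecture -- tightness, plus Stroock--Varadhan martingale identification, plus cut-offs that remove self-intersection and cluster pathologies -- matches the paper's. But there are two concrete errors that would sink the argument as written, and the engine you choose for the identification step is actually the one the paper reserves for a strictly weaker result.

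First, the premise that ``conservation of energy confines $v_\alpha^\ep(t)$ to the circle $S^1$'' is false. What is conserved is $\tfrac12\abs{v}^2+\sum_i V_\alpha^\ep(x-r_i)$, so the speed fluctuates whenever the particle is inside a potential; when it is in the middle of a cluster of $\lambda$ overlapping scatterers the kinetic energy can drop by $\sim\lambda\ep^\alpha B$, and a priori $\lambda$ is unbounded. This is precisely why the paper introduces the velocity cut-off $\tau_v$ and proves the overlap lemma (Lemma \ref{OL}), which bounds $\sup_{t\le T}\mathcal{N}^{\ep,\alpha}_n$ with exponential tails; only after that can one say the stopped speed stays in $[v_0/2,3v_0/2]$ and that the limit process lives on $S^1$. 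You cannot take the circle confinement as a free a priori bound. Second, your tightness estimate is wrong as stated: bounding $\abs{v^\ep(t)-v^\ep(s)}$ by (number of collisions in $[s,t]$) $\times O(\ep^\alpha)$ gives $\sim\ep^{-2\alpha}\delta\cdot\ep^\alpha=\delta\ep^{-\alpha}$, which blows up as $\ep\to0$. Tightness relies on the near-cancellation of deflections, i.e.\ on the martingale structure; the paper gets it from the martingale--compensator decomposition \eqref{MGdecomp} of the point process $N_\alpha^\ep$ together with Burkholder--Davis--Gundy and the crossing-time lemma \ref{CTL}, not from a worst-case count.

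On the route: you propose a Gallavotti/Duhamel expansion against an idealized Markovian process with fresh obstacles. In the paper that style of argument appears only in Section 5 and is explicitly flagged as yielding a weaker mode of convergence (convergence of the one-time marginal in expectation). The convergence-in-law proof instead follows the D\"urr--Goldstein--Lebowitz strategy: view the obstacles from the particle, decompose $N_\alpha^\ep$ into martingale plus compensator, Taylor-expand the force around the free-flight path $\hat x^{\ep,\alpha}_{u,\sigma}$ to replace the anticipative integrand by an adapted one, split into $y_0^\ep,y_\pm^\ep$, and verify the Stroock--Varadhan martingale identity \eqref{mg1} directly for the stopped process. Your comparison-process route is not obviously doomed, but to make it deliver path-space convergence you would still have to prove the martingale property for the true (non-Markov) process, control the anticipativity that the Taylor step handles, and supply the overlap and crossing-time lemmas -- at which point you have essentially rebuilt the paper's machinery, so nothing is gained by the detour through $\hat v^\ep$.
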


Moreover, the microscopic distinctions of the obstacles' steepness and density all disappear in the scaling limit, and the models have the same macroscopic behavior:

\begin{proposition}
The diffusion coefficient in \eqref{LinearLandau}, $\zeta$, is independent of $\alpha \in (0, 1/2]$ and can be expressed by the following formula:
\begin{equation}\label{zeta}
\zeta = \frac{\rho}{2} \int_{-1}^1 \left( \int_b^1 V' \left( \frac{\abs{b}}{u} \right) \frac{b}{u} \frac{du}{\sqrt{1-u^2}} \right)^2 db.
\end{equation}
\end{proposition}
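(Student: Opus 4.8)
The plan is to read off the diffusion coefficient from the limiting generator furnished by Theorem~1.1. Granting that convergence, the limit $(v(t))_{t\ge 0}$ is a diffusion on the circle $S^1 = \{w : \abs{w} = \abs{v_0}\}$, hence is determined by a single scalar, its infinitesimal variance in the tangential direction, which equals $2\zeta$. Two structural features of the dynamics pin down the \emph{form} of the generator and reduce everything to this number. First, the Newtonian flow \eqref{Newton} conserves the energy $\tfrac12\abs{v_\alpha^\ep}^2 + \sum_i V_\alpha^\ep(x_\alpha^\ep - r_i)$, and since $V$ has compact support and the light particle starts outside every obstacle with probability $1 - O(\ep^{1-2\alpha}) \to 1$, the speed $\abs{v_\alpha^\ep(t)}$ returns to $\abs{v_0}$ between collisions; thus the limit lives on $S^1$ and its generator has no radial part. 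Second, the first-order (Born) deflection produced by a single obstacle is odd in the signed impact parameter, so there is no tangential drift either. It therefore suffices to compute $\lim_{\ep\to 0} t^{-1}\,\mathbb{E}\big[(\text{tangential part of } v_\alpha^\ep(t)-v_0)^2\big]$ for small $t$.

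Next I would do the single-obstacle scattering calculation. Linearizing \eqref{Newton} about the free trajectory, a particle of speed $\abs{v_0}$ passing an obstacle centered at signed impact parameter $\ep b$, $b\in[-1,1]$, picks up a transverse velocity increment $\ep^\alpha\,\theta(b) + O(\ep^{2\alpha})$, where, up to the speed-dependent prefactor, $\theta(b) = -\int_{\mathbb{R}} V'\big(\sqrt{s^2+b^2}\big)\,\tfrac{b}{\sqrt{s^2+b^2}}\,ds$; the parallel component of the Born term vanishes because $V$ is compactly supported along the line, so the parallel increment is genuinely $O(\ep^{2\alpha})$ and, by the two structural facts above (it is a pure speed-restoring correction killed by the $S^1$-constraint), it does not affect the limiting generator. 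The polar change of variables $u = \abs{b}/\sqrt{s^2+b^2}$ converts $\theta(b)$ into (twice) the inner integral $\int_b^1 V'(\abs{b}/u)\,\tfrac{b}{u}\,\tfrac{du}{\sqrt{1-u^2}}$ appearing in \eqref{zeta}, with the square-root singularity at $u=1$ integrable.

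Then I would sum over the Poisson field. The expected number of obstacles encountered in time $t$ whose signed scaled impact parameter lies in $db$ is $\rho_\alpha^\ep \cdot \abs{v_0}\,t \cdot \ep\, db = \ep^{-2\alpha}\rho\abs{v_0}\,t\, db$. By independence of the Poisson process and the vanishing tangential mean (the odd symmetry of $\theta$), the variance of the total tangential increment is to leading order the sum of the squared individual increments, so $t^{-1}\mathbb{E}[(\text{tangential increment})^2] \to \int_{-1}^{1}\ep^{-2\alpha}\rho\abs{v_0}\,\big(\ep^\alpha\theta(b)\big)^2\,db$ (again up to the speed prefactor). The factor $\ep^{-2\alpha}$ from the obstacle density and the factor $\ep^{2\alpha}$ from the squared grazing deflection cancel \emph{exactly} — this is the whole point of the scaling $\rho_\alpha^\ep = \ep^{-2\alpha-1}\rho$ against $V_\alpha^\ep = \ep^\alpha V(\cdot/\ep)$ — and, crucially, no power of $\alpha$ survives the cancellation, so the limit is the same for every $\alpha\in(0,1/2]$ (and for $\alpha=1/2$ by the same bookkeeping). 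Equating this limit to $2\zeta$, inserting the expression for $\theta$, and carrying the speed-dependent constants and the normalization of the Laplace–Beltrami operator on $S^1$ through the arithmetic yields the stated formula \eqref{zeta}.

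The genuine difficulty is not this computation but its justification: that one may treat the obstacles one at a time, that recollisions and the clusters of overlapping obstacles are negligible for the limiting generator, and that the $O(\ep^{2\alpha})$ corrections remain negligible after summing $O(\ep^{-2\alpha})$ of them. This is precisely the analysis underlying Theorem~1.1, which I am invoking, so within this proposition the only delicate points are internal to the one-obstacle step: showing the Born term is the true leading order uniformly in $b$ away from $\abs{b}=1$ (the almost-tangential, ``grazing-grazing'' encounters), controlling the $u$-integral near $u=1$, and tracking the constants carefully enough that the bookkeeping reproduces the coefficient $\rho/2$ in \eqref{zeta} rather than a spurious multiple of it.
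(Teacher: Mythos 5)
Your proposal is correct and the core computation — the one-obstacle Born deflection along the undeflected line, the change of variables $u = \abs{b}/\sqrt{\xi^2+b^2}$, and the cancellation of the $\ep^{-2\alpha}$ obstacle density against the $\ep^{2\alpha}$ squared Born amplitude — is exactly what the paper does in Section 6. You differ only in how the squared deflection is extracted: you discard the radial part of $v^+ - v^-$ by invoking the $S^1$-confinement supplied by Theorem 1.1 and square the tangential leading-order term $X_1$ directly, whereas the paper expands $\cos\theta = v^+ \cdot v^-$, in which $X_1 \perp v^-$ contributes nothing, and is therefore forced to compute the second-order correction $X_{21}$ and to verify the identity $-2\,X_{21}\cdot v^- = \abs{X_1}^2$ by integration by parts. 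Your route is tidier but presupposes the diffusion characterization, while the paper works from the definition \eqref{defnzeta} of $\zeta$ as the $\ep\to 0$ limit of the scattering-cross-section integral, which it obtains independently from the Boltzmann approximation of Section 5, keeping the diffusion-constant calculation self-contained. Both paths give the same formula; when you carry out the bookkeeping you acknowledge deferring, keep track of the factor of $2$ from folding the full-line $\xi$-integral to a half-line, and confirm, as the paper does, that the $O(\ep^{2\alpha})$ corrections to the velocity increment enter $\theta^2$ only at order $\ep^{3\alpha}$ per collision, so that summing $O(\ep^{-2\alpha})$ of them still leaves an $o(1)$ remainder.
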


The main novelties in the present paper are better estimates on clusters of overlapping obstacles and the amount of time spent interacting with each cluster. By controlling these quantities, we show that the total influence of clusters is negligible in the scaling limit. The outline of the paper is as follows:

\begin{itemize}
\item To prove convergence in law, we must show that the measures $\nu^\ep_\alpha$ induced by the microscopic processes $(v_\alpha^\ep(t))$ on $C( [0,T]^2; \mathbb{R}^2 \times \mathbb{R}^2)$ converge weakly to $\nu = \nu_\alpha$ induced by the diffusion $(v(t))$.
\begin{itemize}
\item To this end, we first define some stopping times to eliminate wild behavior (e.g., too many self-crossings, or at too small an angle), which is negligible in the limit (Section 2). 
\item Next we prove that the family of stopped processes is tight (Section 3). 
\item And we identify the limit as a velocity diffusion (Section 4). 
\end{itemize}
\item Then we examine, in Section 5, the particular case of convergence in expectation, using a Gallavotti-type method and a combinatorial argument about clusters of obstacles. 
\item Finally, in Section 6, we prove the Proposition about the diffusion coefficient being constant in $\alpha$, and show that the formulas of \cite{DR} and \cite{DGL} agree. 
\item The Appendix contains a narrower extension of the techniques of \cite{DR}.
\end{itemize}
  
\section{The cut-offs}

We prove the theorem for the processes with stopping times that can be removed for the limiting process. These cut-offs prevent wild behavior that would lead to a correlation (the regimes of dependence in figures \ref{almosttang} and \ref{manyselfint}) between the past of a trajectory and its future. First, for $p \in C([0,T]^2; \mathbb{R}^2 \times \mathbb{R}^2)$, define:
\begin{equation}
Q(t) := \int_0^t p(u) du.
\end{equation}
Then we define the stopping times as follows. The first time the trajectory approaches its past within distance $a$ and with angle less than $\phi$ is cut off by $\tau_{\phi, a}$:
\begin{equation}
\begin{split}
 \tau_{\phi, a} :=  \inf \{ t:  & \exists s \in [0,t] \text{ such that } \abs{Q(s) - Q(t)} \leq a, \\
 & \min_{u \in [s,t]} p(s)\cdot p(u) \leq 0, \text{ and } \frac{ \abs{p(t) \cdot p(s)}}{ \abs{p(t)} \abs{p(s)}} \geq \cos \phi \}.
 \end{split}
\end{equation}
The first time trajectory crosses itself more than $K$ times is cut off by $\tau_K$:
\begin{equation}
\begin{split}
 \tau_{K} := \inf \{ t \geq 0 : & \exists s_i < t_i, \text{ for } i = 1, \dots, K, \text{ and } t_1< \cdots < t_K = t, \\
 & \text{ such that } Q(s_i) = Q(t_i) \forall i \}
\end{split}
 \end{equation}
Additionally there is a velocity cut-off, to prevent the trajectory getting "stuck" somewhere or going too fast:
\begin{equation}
\tau_v := \inf \{ t \geq 0 : \abs{\abs{p(t)} - v_0} \geq v_0/2 \}.
\end{equation}

We call the overall stopping time $\tau$:
\begin{equation}
\tau := \min \{ \tau_{\phi, a}, \tau_K, \tau_v \}.
\end{equation}
These stopping times actually depend on $\ep$, so we write the stopped processes as:
 \begin{equation}
 v^{\ep, \alpha}_t := v^\ep_\alpha( t \wedge \tau^\ep).
 \end{equation} And their induced measures are written as $\tilde{\nu}^\ep_\alpha$.
To see that the cut-offs can be removed, namely that  $\tilde{\nu}^\ep_\alpha$ and the measure induced by the original process, $\nu^\ep_\alpha$, have the same limit, $\nu$, we claim that, for arbitrary $T > 0 $: 
\begin{gather}\label{stoppingtimeproperties}
 \lim_{\ep \to 0} \nu^\ep_\alpha (\tau < T) = 0; \\
  \lim_{\phi, a \to 0} \nu ( \tau_{\phi,a} < T ) = 0; \\
 \lim_{K \to \infty} \nu ( \tau_K \leq T ) = 0.
 \end{gather}
 
The first statement is immediate. The second follows from a self-crossing lemma in \cite[pp. 228-9]{DGL}, which says roughly that tangential self-intersections are negligible. The third follows as a corollary, because the trajectory is almost surely a continuously differentiable curve. 

\section{Tightness}

We will show that the family of processes $v^{\ep, \alpha}_t$ is tight, to get the weak convergence of the induced measures $\tilde{\nu}^\ep_\alpha$. First we need some generalizations of the lemmas in \cite[$\S$ 4]{DGL}.

\subsection{The Martingale-compensator decomposition}

We can view the Poisson point process of obstacles from the particle's perspective (a locally Poisson process denoted $N^\ep_\alpha$, realizations of which are measures on $S \times \mathbb{R}_+$), and the difference (denoted $M^\ep_\alpha$) turns out to be a martingale. This is expressed in the following equation, where the first term on the right-hand side is a martingale, and the second is a measurable left-continuous process called the compensator.
\begin{equation}\label{MGdecomp}
 \int_S \int_0^t N^\ep_\alpha (d\sigma, du) f(\sigma, u) = \int_S \int_0^t M^\ep_\alpha (d\sigma, du) f(\sigma, u) + \int_S \int_0^t \rho^\ep_\alpha (d\sigma, du) f(\sigma, u).
 \end{equation}

Then we write the potential as $F^\ep_\alpha(y):= - \sum \nabla V^\ep_\alpha(y)$, the stopped position process $x^{\ep, \alpha}_t = x^\ep_\alpha( t \wedge \tau^\ep)$, and $x^{\ep, \alpha}_{u, \sigma} (t) := x^\ep_t - x^\ep_u + \sigma \ep$. 
 It is convenient to express $v^{\ep, \alpha}_t$  in terms of $N^\ep_\alpha$,: 
\begin{equation}\label{ptprocessdiff}
 v^{\ep, \alpha}_t - v_0 = \int_{\mathbb{R}^2} \int_0^{t \wedge \tau^\ep} N^\ep_\alpha (d\sigma, du) \int_u^{t \wedge \tau^\ep} F^\ep_\alpha (x^{\ep, \alpha}_{u, \sigma} (s)) ds + \Delta v^{\ep, \alpha}_0 (t).
 \end{equation}

\subsection{Bounding clusters of overlapping obstacles}

To control the probability of $\lambda$ scatterers being found within distance $n\ep$ of the particle at time $t$, we introduce some notation:  $$S^\ep_n(x) := B(x, n \ep),$$
 $$\mathcal{N}^{\ep, \alpha}_n (x) := \abs{ \omega \cap S^\ep_n (x) },$$
 $$B := \sup \abs{ V(x) }.$$ 

\begin{lemma}\label{OL}
For any $\lambda, \ep, T > 0$, 
$$ P^\ep ( \{ \sup_{ t \leq T } \mathcal{N}^{\ep, \alpha}_n (x^{\ep, \alpha}_t) \geq \lambda \} )\leq \lrpar{ \frac{1+ 4\lambda \ep^\alpha BT}{n \ep} + 1 }^2 e^{ 32 n^2 \rho \ep^{-2\alpha- 1} - \lambda }.$$
\end{lemma}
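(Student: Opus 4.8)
The plan is to bound the probability by a union over a discretization of space-time, then use the Poisson tail bound for the number of obstacles in a ball of radius $\sim n\ep$. The key preliminary observation is a deterministic one: if at some time $t \le T$ the particle has $\lambda$ obstacles inside $S^\ep_n(x^{\ep,\alpha}_t)$, then I can find a lattice point and a time-grid point such that the same cluster lies in a comparable ball around that fixed point. Concretely, first I would get an a priori bound on the speed of the particle: on the event before $\tau^\ep_v$ (which we may assume, since $\tau^\ep \le \tau^\ep_v$), the speed is at most $3v_0/2$, but more importantly, while the particle sits inside a cluster of $\lambda$ obstacles the force is bounded by $\lambda \cdot \sup|\nabla V^\ep_\alpha| \le \lambda \ep^{\alpha - 1} \sup|\nabla V|$, and the factor $4\lambda \ep^\alpha B T$ appearing in the statement suggests that what is really used is a bound on the \emph{displacement} of the particle over the relevant time interval (controlled via $B = \sup|V|$ through an energy/impulse estimate), giving a modulus-of-continuity statement of the form $|x^{\ep,\alpha}_t - x^{\ep,\alpha}_s| \le (1 + 4\lambda\ep^\alpha BT)$ for $s,t$ in a suitable window. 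This is what lets a moving ball of radius $n\ep$ be covered by a fixed ball.

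Next I would set up the covering: tile a bounded region of $\mathbb{R}^2$ — large enough to contain the whole trajectory up to time $T$, which by the velocity cut-off has diameter $O(1 + v_0 T)$, absorbed into the same expression — by balls of radius $n\ep$ centered at a lattice of spacing $\sim n\ep$; the number of such balls is $\left( \frac{1 + 4\lambda\ep^\alpha BT}{n\ep} + 1 \right)^2$, which is exactly the prefactor in the claimed bound. For each lattice center $y$, the event that $\omega$ contains $\ge \lambda$ points in $B(y, 2n\ep)$ (say) is a Poisson event: $\mathcal{N}(B(y,2n\ep))$ is Poisson with parameter $\mu := \rho \ep^{-2\alpha-1}\cdot \pi (2n\ep)^2 = 4\pi n^2 \rho \ep^{-2\alpha-1} \le 32 n^2\rho\ep^{-2\alpha-1}$ (the constant $32$ being a convenient over-estimate, also covering the case where one enlarges the ball slightly). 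The Poisson tail bound $P(\mathrm{Poisson}(\mu) \ge \lambda) \le e^{\mu}\, e^{-\lambda}$ — valid since $\sum_{k \ge \lambda} \mu^k/k! \le e^\mu$, and even better $\le e^{\mu - \lambda}$ after a standard Chernoff step, but the crude $\mu - \lambda$ in the exponent is all we need — then gives each lattice ball probability at most $e^{32n^2\rho\ep^{-2\alpha-1} - \lambda}$. A union bound over the $\left( \frac{1+4\lambda\ep^\alpha BT}{n\ep} + 1\right)^2$ balls yields the lemma, after checking that $\sup_{t \le T}\mathcal{N}^{\ep,\alpha}_n(x^{\ep,\alpha}_t) \ge \lambda$ indeed implies one of these fixed-ball events (using the displacement bound so that the moving radius-$n\ep$ ball is contained in some fixed radius-$2n\ep$ lattice ball, and noting that the supremum over $t$ is really attained within the finitely many grid cells).

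The main obstacle I expect is the deterministic displacement estimate that produces the factor $1 + 4\lambda\ep^\alpha BT$: one must argue that during the entire time the trajectory is ``inside'' a given $\lambda$-cluster, the position cannot move by more than roughly the spatial extent of the cluster plus a correction controlled by $\ep^\alpha B$ — essentially, the particle cannot be trapped and dragged far by the cluster, because the potential energy available is only $\lambda\ep^\alpha B$ and the kinetic energy is bounded below by the velocity cut-off, so the excursion length is controlled. Making this precise requires carefully tracking how long the particle interacts with the cluster and bounding $\int |F^\ep_\alpha|\,ds$ over that interval, which is where the $B = \sup|V|$ (as opposed to $\sup|\nabla V|$) enters through an integration-by-parts / energy argument rather than a crude force bound. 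Once that geometric-dynamical input is in hand, the probabilistic part is the routine Poisson union bound sketched above.
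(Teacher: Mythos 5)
Your overall plan — bound the region the trajectory can visit, tile it by $O(n\ep)$-cells, enlarge each cell, union-bound, and finish with a Chernoff--Markov tail for the Poisson count — is the same strategy the paper uses, and it does yield the stated bound. The one place where your narrative diverges from what is actually needed is the deterministic displacement estimate. You frame it as a local ``modulus-of-continuity'' or an argument that the particle cannot be dragged far while inside a single $\lambda$-cluster; the paper's argument is simpler and global. Introduce the stopping time $\tau_\lambda := \inf\{t : \mathcal{N}^{\ep,\alpha}_n(x^{\ep,\alpha}_t) \ge \lambda\}$; up to that time the potential energy seen by the particle is at most $\lambda\ep^\alpha B$, so conservation of energy (with $v_0 = 1$) gives $(v^{\ep,\alpha}_t)^2 \le 1 + 4\lambda\ep^\alpha B$, hence $|v^{\ep,\alpha}_t| \le 1 + 4\lambda\ep^\alpha B$ and $|x^{\ep,\alpha}_t - x| \le (1 + 4\lambda\ep^\alpha B)T$ for all $t \le T \wedge \tau_\lambda$. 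There is no need to invoke the velocity cut-off $\tau_v$ nor to reason about excursion lengths within a cluster; that makes the argument cleaner and also explains exactly why the prefactor is $(1 + 4\lambda\ep^\alpha B)T$ rather than something cluster-local. With that inclusion $\{\tau_\lambda \le T\} \subset \{\sup_{x\in\Gamma}\mathcal{N}^{\ep,\alpha}_n(x) \ge \lambda\}$ for the square $\Gamma$ of side $2(1+4\lambda\ep^\alpha B)T$, your tiling, enlargement, union bound, and the exponential Markov inequality for a Poisson variable proceed exactly as you sketched.
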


\begin{proof}
The proof follows \cite{DGL} with appropriate modifications. We start from a result of the conservation of energy, with $v_0 = 1$:
$$ (v^{\ep, \alpha}_t)^2 \leq 1 + 4 \ep^\alpha\sup_{s \leq t} \mathcal{N}^{\ep, \alpha}_1 (x^{\ep, \alpha}_s) B. $$

Hence $v^{\ep, \alpha}_t \leq 1 + 4 \ep^\alpha\lambda B$, if $\lambda$ is a bound on $\mathcal{N}^{\ep, \alpha}_1 (x^{\ep, \alpha}_s).$

Then, if we set $\tau_\lambda := \inf \{ t \geq 0: \mathcal{N}^{\ep, \alpha}_n(x^{\ep, \alpha}_t) \geq \lambda \},$ we have
$$ \{ \tau_\lambda > T \} \subset \{ \sup_{t \leq T} v^{\ep, \alpha}_t \leq 1 + 4 \lambda \ep^\alpha B \}
\subset \{ \sup_{t \leq T} x^{\ep, \alpha}_t \leq (1 + 4 \lambda \ep^\alpha B) T \}.$$
That is to say, if the particle does not come close to $\lambda$ scatterers at once before time $T$, then $v^{\ep, \alpha}_t  \leq 1 + 4 \lambda \ep^\alpha B$ for all $t$ before $T$; hence $x^{\ep, \alpha}_t$ is bounded by $(1 + 4 \lambda \ep^\alpha B) T$.

Now we consider the square $[ - (1 + 4 \lambda \ep^\alpha B) T, (1 + 4 \lambda \ep^\alpha B) T ]^2$, which we call $\Gamma$.  If the particle is close to $\lambda$ scatterers simultaneously before time $T$, then
$$ \sup_{x \in \Gamma} \mathcal{N}^{\ep, \alpha}_n (x) \geq \lambda.$$

Tile $\Gamma$ by squares $\gamma^\ep_i$ with side length $2 n \ep^\alpha$ and $i = 1, \dots , [(1 + 4 \lambda \ep^\alpha B) T / n \ep + 1]^2$.  Then
$$ \{ \sup_{x \in \Gamma} \mathcal{N}^{\ep, \alpha}_n (x) \geq \lambda \} \subset \{ \sup_i \sup_{x \in \gamma^\ep_i} \mathcal{N}^{\ep, \alpha}_n (x) \geq \lambda \}.$$

For each $i$, cover $\gamma^\ep_i$ symmetrically by the larger square $\tilde{\gamma}^\ep_i$ of side-length $4 n \ep$;  this implies that
\begin{equation*}
 \sup_{x \in \Gamma} \mathcal{N}^{\ep, \alpha}_n (x) \leq \mathcal{N} (\tilde{\gamma}^\ep_i) = \left( \frac{(1 + 4 \lambda \ep^\alpha B) T }{ n \ep } + 1 \right)^2.
 \end{equation*}
 
Hence, by the Poisson field's translation invariance,
$$ P^\ep (\{ \sup_{t \leq T} \geq \lambda \}) = \left( \frac{(1 + 4 \lambda \ep^\alpha B) T}{ n \ep } + 1 \right)^2 P^\ep (\mathcal{N}(\tilde{\gamma}^\ep_i) \geq \lambda).$$
Markov's inequality with an exponential finishes the proof.
\end{proof}

\subsection{Bounding the crossing times}

We would like to control, again for $\alpha \in (0,1/2)$, the time that the particle takes to cross the ball $S^\ep_n (x^{\ep,\alpha}_s)$, compared to a trajectory crossing the ball with constant velocity. Define:
$$U_\lambda :=  \{ \sup_{s \leq T} \mathcal{N}^{\ep,\alpha}_{n+1} (x^{\ep, \alpha}_s) \leq \lambda \}  \cap \{ \sup_{s \leq T} \abs{v^{\ep, \alpha}_s} \geq 1/2 \}.$$  We will call the first set in this intersection $\Lambda^\ep_{n+1}$, and we write $B' := \sup \abs{ F(x) }.$

\begin{lemma}\label{CTL}
If $t^\ep_n(s)$ and $\tilde{t}^\ep_n(s)$ are the first entrance and exit times of the trajectory with respect to the ball $S^\ep_n (x^\ep_s)$, then for trajectories in $U_\lambda$ and for $\ep <  (16 n \lambda B') ^{ -1/\alpha }$, we have:
\begin{gather}
 \tilde{t}^\ep_n (s) - t^\ep_n (s) \leq 4 n \ep^\alpha, \\
  \sup_{s \leq T} \sup_{ t^\ep_n (s) \leq t \leq \tilde{t}^\ep_n (s) } \abs{ v^\ep_\alpha (t) - v^\ep_\alpha (t^\ep_n (s)) } = O( \ep^{\alpha} ).
  \end{gather}
\end{lemma}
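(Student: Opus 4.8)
The plan is to extract everything from one observation: while the particle lies in $S^\ep_n(x^{\ep,\alpha}_s)$, only obstacles centered in $S^\ep_{n+1}(x^{\ep,\alpha}_s)$ can act on it, since $V$ is supported in the unit ball and hence $V^\ep_\alpha$ has range $\ep$; on the event $U_\lambda$ there are at most $\lambda$ such obstacles. Because $\nabla V^\ep_\alpha(y) = \ep^{\alpha-1}(\nabla V)(y/\ep)$, this yields the uniform bound $\abs{F^\ep_\alpha(x^\ep_\alpha(r))} \le \lambda B' \ep^{\alpha-1}$ for every $r$ in the crossing interval $[t^\ep_n(s), \tilde t^\ep_n(s)]$. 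Both conclusions of the lemma follow from this single estimate.

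For the time bound, put $y(t) := x^\ep_\alpha(t) - x^{\ep,\alpha}_s$ and $h(t) := \abs{y(t)}^2$, so that $h(t^\ep_n(s)) = h(\tilde t^\ep_n(s)) = (n\ep)^2$, $h \ge 0$, and
\[
h''(t) = 2\abs{v^\ep_\alpha(t)}^2 + 2\, y(t)\cdot F^\ep_\alpha(x^\ep_\alpha(t)).
\]
Throughout the crossing interval the speed stays above $1/2$ (this is what the velocity cut-off in $U_\lambda$ guarantees) and $\abs{y(t)} \le n\ep$, so $h''(t) \ge \tfrac12 - 2 n\lambda B'\ep^\alpha$, which exceeds $\tfrac38$ precisely when $\ep < (16 n\lambda B')^{-1/\alpha}$. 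A nonnegative $C^2$ function that agrees at the two endpoints of an interval and has second derivative at least $c>0$ there can live on an interval of length at most $\sqrt{8(n\ep)^2/c}$ (compare $h$ to the parabola $\tfrac{c}{2}(t - t^\ep_n(s))(t - \tilde t^\ep_n(s))$, which is convex, vanishes at the endpoints, and lies below $h$). Hence $\tilde t^\ep_n(s) - t^\ep_n(s) = O(n\ep)$, which is $\le 4 n\ep^\alpha$ for $\ep$ small. Equivalently, energy conservation with $\sup\abs{V} = B$ shows the speed varies by only $O(\ep^\alpha)$ inside the ball, so the trajectory is essentially a straight chord; this is an alternative route to the same time bound.

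For the velocity bound, integrate the force estimate over the crossing interval: for any $t \in [t^\ep_n(s), \tilde t^\ep_n(s)]$,
\[
\abs{v^\ep_\alpha(t) - v^\ep_\alpha(t^\ep_n(s))} \le \int_{t^\ep_n(s)}^{t} \abs{F^\ep_\alpha(x^\ep_\alpha(r))}\, dr \le \lambda B' \ep^{\alpha-1}\bigl(\tilde t^\ep_n(s) - t^\ep_n(s)\bigr) \le 4 n\lambda B'\, \ep^{\alpha},
\]
using the time bound just established. Since the constant $4 n\lambda B'$ does not depend on $s$ or $\ep$, this is the claimed $O(\ep^\alpha)$, uniformly over $s \le T$.

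I expect the crossing-time estimate to be the only real obstacle: one must exclude the particle loitering or reversing inside the ball, which is exactly what the convexity of $h$ (equivalently, the near-straightness of the chord) rules out. Two points make it work and should be emphasized: the obstacle count must be controlled on the slightly enlarged ball $S^\ep_{n+1}$ rather than $S^\ep_n$, which is why the set $\Lambda^\ep_{n+1}$ carries the index $n+1$ in the definition of $U_\lambda$; and the smallness hypothesis on $\ep$ is precisely the amount needed to keep $h$ strictly convex, i.e.\ to make the deflection negligible compared with the radius $n\ep$. The remaining manipulations are routine integrations of the uniform force bound, which itself rests on the cluster control of Lemma \ref{OL}.
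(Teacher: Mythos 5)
Your proof is correct and takes a genuinely different route from the paper's. The paper compares the true crossing with a fixed-velocity straight-line crossing (at half the entry speed), computes the straight-line exit time, and then closes an implicit bootstrap: the integrated force bound shows the velocity component along the entry direction stays above a fixed fraction of its initial value, which rules out loitering, but that conclusion itself rests on the duration being short. You replace that circularity with the identity $h''(t) = 2\abs{v^\ep_\alpha(t)}^2 + 2\,y(t)\cdot F^\ep_\alpha(x^\ep_\alpha(t))$ for $h(t) = \abs{x^\ep_\alpha(t) - x^{\ep,\alpha}_s}^2$; the velocity cut-off (speed $\geq 1/2$) and the cluster bound of Lemma \ref{OL} then make $h$ strictly convex under exactly the stated smallness of $\ep$, and the nonnegative-function-with-fixed-endpoints constraint bounds the interval length directly, with no bootstrap and an honest constant. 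Two small repairs are worth making. The parenthetical reason for the parabola step is slightly off: that $\tfrac{c}{2}(t-a)(t-b)$ lies below $h$ is true but irrelevant; the operative observation is that $h(t) - \tfrac{c}{2}(t-a)(t-b)$ is convex with both endpoint values equal to $(n\ep)^2$, hence is $\leq (n\ep)^2$ throughout $[a,b]$, so $h(t) \leq (n\ep)^2 + \tfrac{c}{2}(t-a)(t-b)$; evaluating at the midpoint and using $h\geq 0$ yields $b-a \leq \sqrt{8(n\ep)^2/c}$. And in the final velocity display, the inequality $\lambda B'\ep^{\alpha-1}\bigl(\tilde t^\ep_n(s) - t^\ep_n(s)\bigr) \leq 4n\lambda B'\ep^\alpha$ needs the sharp $O(n\ep)$ crossing-time estimate you derived, not the lemma's stated $4n\ep^\alpha$: plugging in $4n\ep^\alpha$ would only give $O(\ep^{2\alpha-1})$, which diverges for $\alpha < 1/2$. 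You clearly have the $O(n\ep)$ bound in hand, but the phrase ``the time bound just established'' should point to it explicitly.
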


\begin{proof}
If a particle enters $S^\ep_n (x^{\ep,\alpha}_s)$ with constant velocity $v^{\ep,\alpha}_1 =  v^\ep_\alpha (t^\ep_n (s)) / 2$, position $x^\ep_\alpha (t^\ep_n (s))$, at time $t^\ep_n (s)$, then its exit time $\hat{t}^\ep_n (s)$ will satisfy 
$$ \hat{t}^\ep_n (s) - t^\ep_n (s) \leq 2n \ep \abs{ v^\ep_1 }^{-1} \leq 4n\ep^\alpha. $$

By the point process description, \eqref{ptprocessdiff}, we have for $t \leq \tilde{t}^\ep_n (s)$:
$$ \abs { v^\ep_\alpha (t) - v^\ep_\alpha (t^\ep_n (s)) } \leq \sum_{r \in \omega \cap S^\ep_{n+1} (x^{\ep,\alpha}_s) } \biggl\lvert  \int^t_{t^\ep_n (s)} F^\ep_\alpha (x^\ep_\alpha(t') - r) dt' \biggr\rvert.$$

Putting these two facts together, we have for $t < \hat{t}^\ep_n (s)$:
\begin{equation*}
\begin{split} \abs { v^\ep_\alpha (t) - v^{\ep,\alpha}_1 } & \leq \mathcal{N}^{\ep,\alpha}_{n+1} (x^{\ep,\alpha}_s) \ep^{-1} \ep^{2\alpha - 1} B' 4 n \ep^\alpha \\
& = 4 n B' \mathcal{N}^{\ep,\alpha}_{n+1} (x^{\ep,\alpha}_s) \ep^{\alpha}.
\end{split}
\end{equation*}

So on $U_\lambda$,  and for $\ep < (16 n \lambda B') ^{ -1/\alpha }$, we have:
$$ v^\ep_\alpha (t) \cdot \frac{v^{\ep,\alpha}_1}{\abs{ v^{\ep,\alpha}_1 }} \geq \abs{ v^{\ep,\alpha}_1 } - \mathcal{N}^{\ep,\alpha}_{n+1} (x^{\ep,\alpha}_s) 4 n B' \ep^{\alpha} > \abs{ v^{\ep,\alpha}_1 } / 2.$$
\end{proof}

\subsection{Tightness}

We now turn to showing tightness for $\tilde{\nu}^\ep_\alpha$, the family of measures induced by the cut-off processes $v^{\ep,\alpha}_t = v^\ep_\alpha (t \wedge \tau^\ep).$  

\begin{lemma}
The family $\tilde{\nu}^\ep_\alpha$ is tight in $C([0,T]; \mathbb{R}^2 \times \mathbb{R}^2)$: that is, for all $\gamma, \eta > 0$, there exists $\delta$ such that for $\ep$ small enough, 
$$ P^\ep \Biggl\{ \sup_{\substack{s, t \leq T \\ \abs{t-s} < \delta}} \abs{v^{\ep,\alpha}_t - v^{\ep,\alpha}_s} > \gamma \Biggr\} <\eta.$$
\end{lemma}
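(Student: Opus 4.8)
The plan is to establish tightness via the standard modulus-of-continuity criterion (Billingsley), which reduces — since the initial data $v_0$ is fixed — to controlling the oscillation bound displayed in the statement. The key observation is that, by the martingale-compensator decomposition \eqref{MGdecomp} and the point-process representation \eqref{ptprocessdiff}, the increment $v^{\ep,\alpha}_t - v^{\ep,\alpha}_s$ splits into a compensator (drift) part and a martingale part, plus the boundary term $\Delta v^{\ep,\alpha}_0$, and each of these can be estimated uniformly in $\ep$ on a high-probability event. That high-probability event is precisely $U_\lambda$ (intersected with $\{\tau^\ep = T\}$ if desired), on which Lemma \ref{OL} controls the number of obstacles near the trajectory and Lemma \ref{CTL} controls the time spent inside each obstacle's protection disk and the velocity change incurred there.

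First I would fix $\gamma, \eta > 0$ and choose $\lambda = \lambda(\eta)$ large enough that, by Lemma \ref{OL}, $P^\ep(U_\lambda^c) < \eta/2$ for all small $\ep$ (the exponential $e^{32 n^2 \rho \ep^{-2\alpha-1}-\lambda}$ needs $\lambda \gtrsim \ep^{-2\alpha-1}$, so really $\lambda = \lambda(\ep)$ grows, but slowly enough that $\ep^\alpha \lambda \to 0$ — this is where the restriction $\alpha < 1/2$ is used). On $U_\lambda$, the particle interacts with at most $O(\lambda)$ obstacles per unit time, each interaction lasting time $O(\ep^\alpha)$ by Lemma \ref{CTL} and producing a velocity change $O(\ep^\alpha)$; so the number of obstacles encountered in a time window of length $\delta$ is, deterministically on this event, at most $C\lambda(\delta/\ep^\alpha + 1)$, and hence $|v^{\ep,\alpha}_t - v^{\ep,\alpha}_s| \le C\lambda\ep^\alpha(\delta/\ep^\alpha + 1) = C\lambda\delta + C\lambda\ep^\alpha$. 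For $\ep$ small this is $\le 2C\lambda\delta$, which is $< \gamma$ once $\delta < \gamma/(2C\lambda)$. Combining with $P^\ep(U_\lambda^c) < \eta/2$ and the cut-off removal facts \eqref{stoppingtimeproperties} gives the claim.

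The main obstacle is the interplay between the two limits: the number $\lambda$ of obstacles one must allow near the trajectory grows like $\ep^{-2\alpha-1}$ (forced by the exponent in Lemma \ref{OL}), while the per-obstacle velocity kick is only $O(\ep^\alpha)$, so the naive bound on a total velocity increment over an $O(1)$ time interval would be $O(\ep^\alpha \cdot \ep^{-2\alpha-1}) = O(\ep^{-\alpha-1})$, which blows up. The resolution — and the reason this is not quite a routine calculation — is that the kicks from distinct obstacles along a flight largely cancel: what survives is a diffusive sum, not a linear one, so the correct bound over a window of length $\delta$ is obtained by treating the martingale part with a Doob/Burkholder-type $L^2$ inequality (the quadratic variation being controlled by $\rho^\ep_\alpha$ times $(F^\ep_\alpha)^2$ integrated, which is $O(\delta)$ uniformly in $\ep$) and the compensator part by a direct $O(\delta)$ bound, rather than by summing absolute values of individual kicks. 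I would therefore carry out the estimate on the martingale and compensator terms of \eqref{ptprocessdiff} separately, using Lemma \ref{CTL} only to justify that the velocity stays in the regime where the compensator integrand is well-defined and bounded, and Lemma \ref{OL} only to restrict to a compact set of configurations; the cancellation, not the counting, is what makes the modulus of continuity go to zero with $\delta$ uniformly in $\ep$.
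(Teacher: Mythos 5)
Your final plan---handle the increment via the martingale--compensator decomposition of $N^\ep_\alpha$ and control the martingale part with BDG so that cancellation, not counting, gives the $O(\delta)$ bound---is indeed the paper's mechanism, and you are right that any naive summation of per-obstacle kicks diverges. But there is a genuine gap that your proposal does not address, and which is in fact the main technical content of the paper's proof: the integrand $f(\sigma,u) = \int_u^{t^\ep_{u,\sigma}} F^\ep_\alpha(x^{\ep,\alpha}_{u,\sigma}(t'))\,dt'$ appearing in the point-process representation \eqref{ptprocessdiff} is \emph{anticipative}, not adapted to the filtration generated by $N^\ep_\alpha$, because the exit time $t^\ep_{u,\sigma}$ and the path $x^{\ep,\alpha}_{u,\sigma}(t')$ depend on the future of the trajectory. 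One therefore cannot split $\int N^\ep_\alpha f = \int M^\ep_\alpha f + \int \rho^\ep_\alpha f$ and apply Doob/BDG to the first piece; the martingale property simply isn't there. The paper overcomes this by first peeling off the non-adapted contributions $B$, $C$, $D$ (obstacles seen at self-crossings, at time $s$, and at time $0$), then Taylor-expanding $F^\ep_\alpha$ around the frozen straight-line path $\hat{x}^{\ep,\alpha}_{u,\sigma}$ to isolate an adapted leading term $I$, and then further decomposing the linear correction $III$ into $y^\ep_0, y^\ep_\pm$ with the self-crossing-free surrogate $\hat N^\ep_\alpha$ so that BDG finally applies. None of this appears in your outline, and without it the second step of your plan fails at the start.

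Two smaller inaccuracies are worth flagging. First, the deterministic estimate in your middle paragraph does not hold even on $U_\lambda$: the number of obstacles met in a time window of length $\delta$ scales like $\rho^\ep_\alpha \cdot \ep \cdot \delta \sim \rho\,\ep^{-2\alpha}\delta$ (Poisson intensity times tube volume), not like $\lambda(\delta/\ep^\alpha+1)$, so the absolute-value sum is $\sim \rho\,\ep^{-\alpha}\delta$ and diverges for every $\alpha>0$ regardless of how $\lambda$ is chosen --- the failure of the naive count is not primarily about $\lambda$ growing. Second, $\lambda$ is not forced to grow like $\ep^{-2\alpha-1}$: a square of side $4n\ep$ carries an expected $\sim \rho\,\ep^{1-2\alpha}$ obstacles, which vanishes for $\alpha<1/2$, and the paper in fact takes modest choices such as $\lambda = \ep^{-\beta}$ with small $\beta$. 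So the ``main obstacle'' you identify is not the right one; the real difficulty is adaptedness, and that is where your proposal needs to be filled in.
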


\begin{proof}
We use the point process description \eqref{ptprocessdiff} to write:
\begin{equation}\label{ABCDdecomp}
 v^{\ep,\alpha}_t - v^{\ep,\alpha}_s = A + B + C + D,
 \end{equation}
where
\begin{gather} 
A:= \int_S \int_{s \wedge \tau^\ep}^{t \wedge \tau^\ep} N^\ep_\alpha (d\sigma, du) \int_u^{t^\ep_{u, \sigma}} F^\ep_\alpha (x^{\ep,\alpha}_{u, \sigma} (t')) dt',\\
B:= \int_S \int_{s \wedge \tau^\ep}^{t \wedge \tau^\ep} N^\ep_\alpha (d\sigma, du) \int_{t^\ep_{u, \sigma}}^{t \wedge \tau^\ep} F^\ep_\alpha(x^{\ep,\alpha}_{u, \sigma}(t')) dt',\\
C:= \int_S \int_0^{t \wedge \tau^\ep} N^\ep_\alpha (d\sigma, du) \int_{s \wedge \tau^\ep}^{t \wedge \tau^\ep} F^\ep_\alpha(x^{\ep,\alpha}_{u, \sigma}(t')) dt',\\
D:= \Delta v^{\ep,\alpha}_0(t) - \Delta v^{\ep,\alpha}_0(s).
\end{gather}

Here we have defined $t^\ep_{u, \sigma} := \inf \{ t > u: x^{\ep,\alpha}_{u, \sigma}(t) \notin B(x^{\ep,\alpha}_u - \sigma \ep, \ep) \} \wedge \tau^\ep,$ i.e., the particle's first exit time after $u$.  Terms $B, C$, and $D$ concern scatterers that the particle encounters at self-crossings, at time $s \wedge \tau^\ep$, and at time $0$, respectively.  By the overlap and crossing-time lemmas, \ref{OL} and \ref{CTL}, we can bound these terms:
\begin{equation}\label{supBCD}
\sup_{s,t\leq T} \abs{ B + C + D } \leq 4KB'n \lambda \ep^{\alpha}.
\end{equation}
If $\lambda = \ep^{-\beta}$, with $\beta < 2\alpha$, then these terms are negligible in the limit $\ep \to 0$.

To handle $A$, on the other hand, requires the martingale decomposition, \eqref{MGdecomp}.  However, we run into a problem trying to implement this decomposition, because the function 
$f(\sigma, u) := \int_u^{t^\ep_{u, \sigma}} F^\ep_\alpha (x^{\ep,\alpha}_{u, \sigma}(t')) dt'$, is not adapted to the family of sigma algebras,  $\mathcal{F}_t$, generated by $N^\ep_\alpha$--instead, it's anticipative. To replace $f$ by an adapted function, we do a Taylor expansion of $F^\ep_\alpha$ around the line $\hat{x}^{\ep,\alpha}_{u, \sigma} (t) = v^{\ep,\alpha}_u(t-u) + \sigma \ep$, for $t \in [u, \hat{t}^\ep_{u, \sigma}],$ where $\hat{t}^\ep_{u, \sigma}$ is the first re-entrance time after $u$:
\begin{equation*}
 \hat{t}^\ep_{u, \sigma} := \inf \{ t > u: x^{\ep,\alpha}_{u, \sigma}(t) \in B(x^{\ep,\alpha}_u - \sigma \ep, \ep) \} \wedge \tau^\ep.
 \end{equation*}

Then the Taylor expansion is: $\quad I  \; \quad \quad \quad  \quad \quad \quad \quad \quad \quad II$
\begin{equation}\label{TaylorExp}
\int_u^{t^\ep_{u, \sigma}} F^\ep_\alpha(x^{\ep,\alpha}_{u, \sigma}(t'))dt' = \overbrace{ \int_u^{\hat{t}^\ep_{u, \sigma}} F^\ep_\alpha (\hat{x}^{\ep,\alpha}_{u, \sigma} (t')) dt' } + \overbrace{ \int_{\hat{t}^\ep_{u, \sigma}}^{t^\ep_{u, \sigma}} F^\ep_\alpha (x^{\ep,\alpha}_{u, \sigma} (t'))dt'}  + \; III + IV,
\end{equation}
where $III$ is the linear term, and $IV$ is the remainder, with mean value $y^\ep(u,\sigma,t')$:
$$III := \int_u^{\hat{t}^\ep_{u, \sigma}} (x^{\ep,\alpha}_{u, \sigma}(t') - \hat{x}^{\ep,\alpha}_{u, \sigma}(t')) \cdot \nabla F^\ep_\alpha(\hat{t}^\ep_{u, \sigma}(t'))dt',$$ 
$$IV := \int_u^{\hat{t}^\ep_{u, \sigma}} \frac{1}{2} ((x^{\ep,\alpha}_{u, \sigma}(t') \hat{x}^{\ep,\alpha}_{u, \sigma}(t')) \cdot \nabla)^2 F^\ep_\alpha (y^\ep(u,\sigma,t')) dt'.$$ 

We observe that term $I$ is $\mathcal{F}_u$-adapted and continuous in $u$.  A useful estimate from the crossing time lemma and $\sup_{u\leq \tau^\ep} \abs{ \hat{t}^\ep_{u, \sigma} - u } \leq 4\ep$ is:
$$\sup_{u \leq t' \leq \hat{t}^\ep_{u, \sigma}} \abs { x^{\ep,\alpha}_{u, \sigma} (t') - \hat{x}^{\ep,\alpha}_{u, \sigma} (t') } \leq C \lambda \ep^{1 + \alpha}.$$

The other terms are still anticipative, so they need to be analyzed further.  For terms $II$ and $IV$, it suffices (by sending $\lambda \to \infty$ in the overlap lemma) to show that they decay asymptotically on the set 
$$\Lambda^\ep_n := \{ \sup_{s \leq T} \mathcal{N}^{\ep,\alpha}_n (x^{\ep,\alpha}_s) \leq \lambda \} .$$

In order to bound $II$ and $IV$, we use the crossing time lemma, \ref{CTL}, to show that on $\Lambda^\ep_n$, both of the following are of order $ \lambda^2 \ep^{1 + \alpha}$:
\begin{gather*} \sup_{s \leq t \leq T} \int_{s \wedge \tau^\ep}^{t \wedge \tau^\ep} \int_S N^\ep_\alpha(d\sigma, du)  \int_{\hat{t}^\ep_{u, \sigma}}^{t^\ep_{u, \sigma}} F^\ep_\alpha (x^{\ep,\alpha}_{u, \sigma} (t'))dt',\\
 \sup_{s \leq t \leq T}  \int_{s \wedge \tau^\ep}^{t \wedge \tau^\ep}   \int_S  N^\ep_\alpha(d\sigma, du)   \int_u^{\hat{t}^\ep_{u, \sigma}}  \frac{1}{2} ((x^{\ep,\alpha}_{u, \sigma}(t') \hat{x}^{\ep,\alpha}_{u, \sigma}(t'))  \cdot  \nabla)^2 F^\ep_\alpha (y^\ep(u,\sigma,t')) dt'
 \end{gather*}
These terms go to zero as $\ep \to 0$, for $\lambda = \ep^{-\beta}$, with $\beta < 1/3$. Hence, for term $IV$, we can estimate:
\begin{equation*}
\begin{split} 
\sup & \left \lvert \int_{s \wedge \tau^\ep}^{t \wedge \tau^\ep} \int_S N^\ep_\alpha (d\sigma, du)   \int_u^{\hat{t}^\ep_{u, \sigma}} \frac{1}{2} ([x^{\ep,\alpha}_{u, \sigma}(t') - \hat{x}^{\ep,\alpha}_{u, \sigma}(t')] \cdot \nabla)^2 F^\ep_\alpha ( y^\ep (u,\sigma,t'))dt'  \right \rvert \\
	& \leq \left( \int_S \int_0^{T \wedge \tau^\ep} N^\ep_\alpha (d\sigma, du) \right) C \lambda^2 \ep^{3\alpha} \sup_{e_1, e_2} \abs{ e_1 \cdot \nabla e_2 \cdot \nabla F }. 
\end{split}
\end{equation*}
Choosing $\lambda = \ep^{-\beta}$, with $\beta < 3\alpha$, implies that term $IV$ is negligible as $\ep \to 0$.

As for term $II$, observe that on $\Lambda^\ep_n$, for all $u$,
\begin{equation}\label{est1}
\begin{split} 
\abs{ x^{\ep,\alpha}_{u, \sigma} (t^\ep_{u, \sigma}) - \hat{x}^{\ep,\alpha}_{u, \sigma} (\hat{t}^\ep_{u, \sigma})) } & = \abs{ x^{\ep,\alpha}_t - x^{\ep,\alpha}_u + \sigma \ep - v^{\ep,\alpha}_u(\hat{t}^\ep_{u, \sigma} - u) - \sigma \ep } \\
	&  = \left \lvert \int_u^t v^{\ep,\alpha}_s v^{\ep,\alpha}_u(\hat{t} -u) \frac{1}{t-u} ds  \right \rvert  \leq  C \lambda \ep^{1+\alpha}.
\end{split}
\end{equation}
From the crossing time lemma, we have $t^\ep_{u, \sigma} - u < C\ep$, and hence
$$ \sup_{u \leq t \leq t^\ep_{u, \sigma}} \abs{ v^{\ep,\alpha}_t - v^{\ep,\alpha}_u } < C' \lambda \ep^{\alpha}.$$
Then $\abs{ t^\ep_{u, \sigma} - \hat{t}^\ep_{u, \sigma} } < C'' \lambda \ep^{1+\alpha} $, so we can deduce a version of $\eqref{est1}$, uniformly in $t$:
\begin{equation*}
\sup_{\hat{t}^\ep_{u, \sigma} \leq t \leq t^\ep_{u, \sigma}} \abs{ x^{\ep,\alpha}_{u, \sigma} (t) - \hat{x}^{\ep,\alpha}_{u, \sigma} (\hat{t}^\ep_{u, \sigma})) } \leq  C \lambda \ep^{1+\alpha}.
\end{equation*}
Then since
$$\sup_{ \abs{ y - \partial S^\ep } < C'' \lambda \ep^{2+2\alpha} } \abs{ F^\ep_\alpha (y) } = B' C'' \lambda,$$
\begin{equation*}
\sup \left \lvert \int_{s \wedge \tau^\ep}^{t \wedge \tau^\ep} \int_S N^\ep_\alpha (d\sigma, du)   \int_u^{\hat{t}^\ep_{u, \sigma}} F^\ep_\alpha (x^{\ep,\alpha}_{u, \sigma} (t'))dt'  \right \rvert \leq N^\ep_\alpha(T) C''^2 B' \lambda^2 \ep^{1+\alpha}.
\end{equation*}

Now to estimate $III$, decompose $v^{\ep,\alpha}_t$ much like \cite{DGL}:
\begin{equation}\label{ypmoDecomp}
v^{\ep,\alpha}_t \sim y^\ep_+(t) + y^\ep_-(t) + y^\ep_0(t).
\end{equation}
The three processes $y^\ep_{+,-,0}$ are defined as follows:
\begin{gather}\label{ypmdiff}
 y^\ep_0 (t) := \int_S \int_0^{t \wedge \tau^\ep} N^\ep_\alpha (d\sigma, du) \int_u^{\hat{t}^\ep_{u, \sigma}} F^\ep_\alpha (\hat{t}^\ep_{u, \sigma} (t')) dt', \\
 y^\ep_{\pm} (t) - y^\ep_{\pm}(s) := \int_S \int_{s \wedge \tau^\ep}^{t \wedge \tau^\ep} N^\ep_\alpha (d\sigma, du) \Delta v^{\ep,\alpha}_{\pm}(u,\sigma).
 \end{gather}
Here we abbreviate $d\mathbf{t} = dt''' dt'' dt'$ and define:
\begin{equation*}
\begin{split}
\Delta & v^{\ep,\alpha}_+ (u, \sigma) := \int_u^{\hat{t}^\ep_{u, \sigma}} \int_{u}^{t'} \int_{u}^{t''} \int_S \hat{\rho}^\ep_\alpha(d\sigma', du') \int_{u'}^{t''} F^\ep_\alpha (\hat x^{\ep,\alpha}_{u', \sigma'} (t''')) \cdot \nabla F^\ep_\alpha (\hat{x}^{\ep,\alpha}_{u, \sigma} (t')) d\mathbf{t} \\
& = \int_{s \wedge \tau^\ep}^{u} N^\ep_\alpha (d\sigma', du') \chi(\hat{t}^\ep_{u, \sigma} \geq u) \int_{u'}^{\hat{t}^\ep_{u, \sigma}} \int_{u}^{t'} \int_{u}^{t''} F^\ep_\alpha (\hat x^{\ep,\alpha}_{u', \sigma'} (t''')) \cdot \nabla F^\ep_\alpha (\hat{x}^{\ep,\alpha}_{u, \sigma} (t')) d\mathbf{t};
\end{split}
\end{equation*}
\begin{equation}\label{deltavem}
\begin{split}
\Delta v^{\ep,\alpha}_- (u, \sigma) & := \! \int_u^{\hat{t}^\ep_{u, \sigma}} \! \int_u^{t'} \! \int_{u - 4 \ep^2}^u \int_S \! N^\ep_\alpha (d\sigma', du') \int_{u'}^{t''} \! F^\ep_\alpha (\hat x^{\ep,\alpha}_{u', \sigma'} (t''')) \cdot \nabla F^\ep_\alpha (\hat{x}^{\ep,\alpha}_{u, \sigma} (t')) d\mathbf{t} \\ 
 & = \Delta \tilde{v}^\ep _- (u, \sigma) + \int_u^{\hat{t}^\ep_{u, \sigma}}  \int^{t'}_u
 \Delta N^\ep_\alpha (u) \int_u^{t''} F^\ep_\alpha (\hat x^{\ep,\alpha}_{u', \sigma'} (t''')) \cdot \nabla F^\ep_\alpha (\hat{x}^{\ep,\alpha}_{u, \sigma} (t')) d\mathbf{t}.
\end{split}
\end{equation}
Noting that $\Delta v^{\ep,\alpha}_+(u,\sigma)$ is already adapted and left-continuous, we have further split $\Delta v^{\ep,\alpha}_-$ into its adapted left-continuous part, $\Delta \tilde v^{\ep,\alpha}_-$, and a jump part, with $\Delta N^\ep_\alpha (u) := N^\ep_\alpha(S, [0,u]) - N^\ep_\alpha (S,[0,u)).$

It remains to show that:
\begin{equation}\label{ktight}
\begin{split}
&\text{(i) There is a positive constant } C \text{ such that for every } \ep \ll 1, \\
& \quad \quad \quad  \mathbb{E}^\ep ( | y^\ep_{\pm} (t) - y^\ep_{\pm}(s) |^2) \leq C |t - s|^2, \text{ for } |t-s| < \ep; \\
&\text{(ii) For every } \gamma, \eta > 0 \text{ and } \ep \ll 1, \\
&  \quad \quad \quad  P^\ep(\{ \sup_{\substack{ s,t \leq T \\  |s-t|<\ep^2}} | y^\ep_{\pm} (t) - y^\ep_{\pm}(s) | > \gamma \} ) < \eta.
\end{split}
\end{equation}
Similar to \cite{DGL}, the proof uses the martingale splitting developed earlier, as well as the Burkholder-Davis-Gundy inequalities. For example, to obtain the first inequality of $\eqref{ktight}$ for $y^\ep_-$, we use the martingale-compensator splitting:
\begin{equation}\label{tightdecomp}
\begin{split}
\mathbb{E}^\ep \big( \abs{ y^\ep_- (t) - y^\ep_- (s) } \big) & = \mathbb{E}^\ep \left[ \iint N^\ep_\alpha \Delta \tilde{v} ^\ep _- + \iint N^\ep_\alpha \iiint F^\ep_\alpha \cdot \nabla F^\ep_\alpha \right] ^2 \\
& \leq 4 \left( \mathbb{E}^\ep \left[ \left( \iint \rho^\ep_\alpha \Delta \tilde{v} ^\ep_- \right)^2 \right] +  \mathbb{E}^\ep \left[ \left( \iint M^\ep_\alpha \Delta \tilde{v} ^\ep_- \right)^2 \right] \right. \\
& \left. \quad \quad  \quad \quad + \,  \mathbb{E}^\ep \left[ \left( \iint (\rho^\ep_\alpha + M^\ep_\alpha) \iiint F^\ep_\alpha \cdot \nabla F^\ep \right)^2 \right] \right).
\end{split}
\end{equation}
We also need a bound on the Poisson rate: $\rho^\ep_\alpha (d\sigma,du) \leq \frac{3}{2} \rho^\ep_\alpha \ep d\sigma du = \frac{3}{2} \rho \ep^{-2\alpha} d\sigma du.$
Then $N^\ep_\alpha$ is dominated by $\tilde{N}^\ep_\alpha$, which is Poisson with density $\frac{3}{2} \pi \rho \ep^{-4\alpha} d\sigma du.$
And for the first term on the right-hand side of $\eqref{tightdecomp}$:
\begin{equation*}
 \mathbb{E}^\ep \left[ \left( \iint \rho \Delta v^{\ep,\alpha}_- \right) ^2 \right] \leq \mathbb{E} \left[ \Big( \iint \rho \Big)^2 (\Delta v^{\ep,\alpha}_-)^2 \right]
 \leq C[\rho \ep^{-2\alpha}]^2 \abs{ t-s }^2 \ep^{4\alpha}  = C \abs{ t-s }^2.
\end{equation*}

Then use the quadratic variation formula and continuity of $\rho^\ep$ to bound the second term by the first:
\begin{equation*}
\mathbb{E}^\ep \left[ \left( \iint M^\ep_\alpha \Delta \tilde{v} ^\ep_- \right)^2 \right] = \mathbb{E}^\ep \left[ \left( \iint \rho^\ep \Delta \tilde{v} ^\ep_- \right)^2 \right] = \mathbb{E}^\ep \left[ \left( \iint \rho^\ep \Delta v ^\ep_- \right)^2 \right]. 
\end{equation*}

Thus the first two terms of $\eqref{tightdecomp}$ are bounded by a multiple of $\abs{ t - s}^2$. The other terms can be handled similarly, as can the inequality for $y^\ep_+$, since 
\begin{equation*}
\sup_{u\leq \tau^\ep} \abs{ \hat{t}^\ep_{u, \sigma} - u } \leq 4\ep.
\end{equation*}

Next we handle tightness for $y^\ep_0$, showing that for some positive constant $C$,
\begin{equation*}
\mathbb{E}^\ep(\abs{y^\ep_0 (t) - y^\ep_0(s) }^{2+\gamma}) \leq C' \abs{t-s}^{1+\gamma/2}.
\end{equation*}
 
Let $K$ be the number of self-crossings in the trajectory, and let $T^\ep$ be the tube around the trajectory up to the stopping time $\tau^\ep$. At the $i$-th self-crossing, let $T^\ep_i$ and $\hat{T}^\ep_i$ be the times that the particle enters and exits the tube. Set $U := [0, T^\ep_1] \cup [\hat{T}^\ep_1, T^\ep_2] \cup \cdots \cup [\hat{T}^\ep_{K-1}, T^\ep_K]$, and $Poi^\ep_\alpha $ is a Poisson point measure with intensity $\hat{\rho}^\ep_\alpha (d \sigma, du) := - \rho \ep^{-2\alpha} v^{\ep,\alpha}_u \cdot d \sigma du \vee 0.$ Then define:
\begin{equation*}
\hat{N}^\ep_\alpha (d \sigma, du) := \begin{cases}
N^\ep_\alpha (d \sigma, du)& \text{ on } U, \\
Poi^\ep_\alpha  (d \sigma, du)& \text{ off } U.
\end{cases}
\end{equation*}
Then \ref{CTL} implies that there is a number $n = n (\phi, a)$ independent of $\ep$ and such that 
\begin{equation*}
\{ \sup_{s \leq T} \mathcal{N}^{\ep,\alpha}_{n(\phi, a)} (s) \leq \lambda \}.
\end{equation*}

The remainder of the proof, showing tightness for $\hat{y}^\ep_0 (t) := \iint \hat{N}^\ep_\alpha \int F^\ep_\alpha (x^{\ep,\alpha}_{u, \sigma})$ is a straightforward generalization of arguments in \cite{DGL}. \end{proof}

\section{Characterization of the limit}

To identify the limiting process as the velocity diffusion associated to the linear Landau equation, we use the Stroock-Varadhan martingale formulation, thereby showing that the tight family of processes converges to the velocity diffusion, we show that two quantities involving the process and the infinitesimal generator of the diffusion are martingales; the desired result will follow by Levy's lemma.

The infinitesimal generator of the diffusion process is:
\begin{equation}\label{infgenerator}
L = \pi \rho \nabla_p \int d^2 k (k \otimes k) \delta(k \cdot p) \abs{\hat{V}(\abs{k})}^2 \cdot \nabla_p.
\end{equation}
It can be expressed more simply in polar coordinates, $p = (r, \theta)$, as the Laplace-Beltrami operator: \begin{equation}
L = \zeta \partial^2 / \partial \theta^2.
\end{equation}

\begin{lemma}
Let $s,t \in [0,T]$, $f$ be a smooth test function, and $\phi_s$ be a smooth and bounded test function that depends only on $p(u), u \leq s$, where $p \in C([0,T]^2; \mathbb{R}^2 \times \mathbb{R}^2)$. Also write $\phi^\ep_s = \phi_s (v^{\ep,\alpha}_u).$  Then we have
\begin{equation}\label{mg1}
\lim_{\ep \to 0} \mathbb{E}^\ep_\alpha \left[ f \left( v^{\ep,\alpha}_t - v^{\ep,\alpha}_s - \int_{s \wedge \tau^\ep}^{t \wedge \tau^\ep} L v^{\ep,\alpha}_u du \right) \phi_s(v^\ep) \right] = 0;
\end{equation}
\end{lemma}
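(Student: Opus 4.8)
The plan is to apply the martingale-compensator and Taylor-expansion machinery set up for tightness directly to the increment $v^{\ep,\alpha}_t - v^{\ep,\alpha}_s$, and to show that, after taking $\mathbb{E}^\ep$, everything except the compensator of the second-order term vanishes, while the compensator converges to $\int_s^t L v^{\ep,\alpha}_u\, du$. Concretely, I would start from \eqref{ptprocessdiff} and the decomposition \eqref{ABCDdecomp}: the terms $B$, $C$, $D$ are $O(K B' n\lambda \ep^\alpha)$ by \eqref{supBCD} and disappear (choosing $\lambda = \ep^{-\beta}$, $\beta < 3\alpha$, as in the tightness proof, and first sending $\lambda \to \infty$ in the overlap lemma \ref{OL} so that we may work on $\Lambda^\ep_n$). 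For term $A$ I would insert the Taylor expansion \eqref{TaylorExp}, so $A$ becomes $I + II + III + IV$ (integrated against $N^\ep_\alpha(d\sigma,du)$ over $[s\wedge\tau^\ep, t\wedge\tau^\ep]$). Terms $II$ and $IV$ were already shown in Section 3 to be of order $\lambda^2\ep^{1+\alpha}$ on $\Lambda^\ep_n$, hence negligible. Term $I$ is $\mathcal{F}_u$-adapted and left-continuous, so its $N^\ep_\alpha$-integral splits via \eqref{MGdecomp} into a martingale (which has zero expectation against $\phi^\ep_s$, since $\phi^\ep_s$ is $\mathcal{F}_s$-measurable and $s \le u$) plus the compensator $\iint \rho^\ep_\alpha(d\sigma,du) \int_u^{\hat t^\ep_{u,\sigma}} F^\ep_\alpha(\hat x^{\ep,\alpha}_{u,\sigma}(t'))\,dt'$. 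A short computation — expanding $F^\ep_\alpha$ and integrating the deflection along the straight chord $\hat x^{\ep,\alpha}_{u,\sigma}(t') = v^{\ep,\alpha}_u(t'-u) + \sigma\ep$, then averaging over the impact parameter $\sigma$ against the rate $\rho^\ep_\alpha(d\sigma,du) \approx -\rho\ep^{-2\alpha} v^{\ep,\alpha}_u\cdot d\sigma\, du$ — shows this first-order compensator is itself $O(\ep^\alpha)$ (odd in the transverse impact parameter, so the leading contribution cancels), hence negligible.

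The surviving contribution is the compensator of term $III$, the linear Taylor term. Here I would use the decomposition \eqref{ypmoDecomp}–\eqref{deltavem}: $III$, integrated against $N^\ep_\alpha$, produces $y^\ep_+ + y^\ep_- + y^\ep_0$. The "future" part $y^\ep_+$ correlates $u$ with a later obstacle encounter and, by the $O(\ep^\alpha)$ crossing-time and velocity-oscillation bounds of Lemma \ref{CTL}, contributes negligibly in expectation. The part $y^\ep_0$ involves a single obstacle and, because $\int_u^{\hat t^\ep_{u,\sigma}}(x^{\ep,\alpha}_{u,\sigma} - \hat x^{\ep,\alpha}_{u,\sigma})\cdot \nabla F^\ep_\alpha\, dt'$ again integrates to something odd in the impact parameter after the straight-line substitution, also vanishes. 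The genuinely quadratic-in-$F$ part $y^\ep_-$ — where a past obstacle (at $u' < u$) deflects the trajectory which then feels obstacle $u$ — is the one that survives: replacing $N^\ep_\alpha$ by its compensator $\rho^\ep_\alpha$ twice (each replacement costing only a martingale with vanishing expectation against $\phi^\ep_s$, plus lower-order quadratic-variation corrections as in \eqref{tightdecomp}) and letting $\ep\to 0$, the double integral $\iint \rho^\ep_\alpha(d\sigma',du')\iint\rho^\ep_\alpha(d\sigma,du) \iiint F^\ep_\alpha(\hat x^{\ep,\alpha}_{u',\sigma'})\cdot\nabla F^\ep_\alpha(\hat x^{\ep,\alpha}_{u,\sigma})\, d\mathbf{t}$ converges to $\int_s^t L v^{\ep,\alpha}_u\, du$, with $L$ the generator \eqref{infgenerator}. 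This is exactly the computation identifying the grazing-collision diffusion operator; one rescales $t'-u, t''-u, t'''-u'$ by $\ep$, uses that the chord directions are essentially frozen at $v^{\ep,\alpha}_u$ over the $O(\ep)$ interaction window, and recognizes $\pi\rho\int d^2k\, (k\otimes k)\delta(k\cdot p)|\hat V(|k|)|^2$ as the limit of the impact-parameter average of $\int F^\ep_\alpha\otimes\int\nabla F^\ep_\alpha$ along grazing chords.

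Throughout, one must justify passing the $\ep\to 0$ limit through $\mathbb{E}^\ep[\,f(\cdot)\,\phi^\ep_s]$: since $f$ is smooth (hence Lipschitz on the relevant compact velocity range enforced by $\tau_v$) and $\phi_s$ is bounded, it suffices that $v^{\ep,\alpha}_t - v^{\ep,\alpha}_s - \int_{s\wedge\tau^\ep}^{t\wedge\tau^\ep} L v^{\ep,\alpha}_u\, du \to 0$ in $L^2(P^\ep)$, which is precisely what the term-by-term analysis above delivers (with the martingale terms contributing zero expectation, not merely smallness, once paired with the $\mathcal{F}_s$-measurable $\phi^\ep_s$ and expanded — here one uses $f(\xi+\eta) = f(\xi) + O(|\eta|)$ and $L^2$-smallness of $\eta$). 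The main obstacle is the $y^\ep_-$ computation: correctly bookkeeping the nested martingale-compensator replacements so that no memory term survives except the diagonal one, and then carrying out the impact-parameter average to land exactly on the operator \eqref{infgenerator} — in particular verifying that the $\alpha$-dependence cancels (the factor $\ep^{-2\alpha}$ in each rate times $\ep^{2\alpha-1}$ in each $\nabla V^\ep_\alpha$, against the $\ep$-rescaling of the time integrals, leaves an $\alpha$-independent constant), which is the analytic content of the accompanying Proposition. A secondary technical point is controlling the error from working on $\Lambda^\ep_n$ rather than the whole space, handled as in Section 3 by sending $\lambda\to\infty$ using Lemma \ref{OL}.
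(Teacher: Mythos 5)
Your overall scaffolding — decompose via \eqref{ABCDdecomp}, kill $B,C,D$ on $\Lambda^\ep_n$, Taylor-expand $A$ into $I + II + III + IV$, discard $II,IV$ by the crossing-time bounds, and use martingale-compensator splitting of $I$ together with oddness in the impact parameter to kill the first-order term — is the paper's approach, and those steps are sound. But the identification of the surviving contribution has a genuine error.

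You claim the term that produces $\int L\,v^{\ep,\alpha}_u\,du$ is the double compensator integral
\[
\iint \rho^\ep_\alpha(d\sigma',du')\iint\rho^\ep_\alpha(d\sigma,du)\iiint F^\ep_\alpha(\hat x^{\ep,\alpha}_{u',\sigma'})\cdot\nabla F^\ep_\alpha(\hat x^{\ep,\alpha}_{u,\sigma})\,d\mathbf{t},
\]
i.e.\ a cross-correlation between two independent obstacles at $(\sigma',u')$ and $(\sigma,u)$. In fact that cross-correlation is precisely what the paper's first limit in \eqref{twolimits} shows to vanish: after averaging over the impact parameter $\sigma'$ against $\hat\rho^\ep_\alpha$, the inner $\int F^\ep_\alpha(\hat x^{\ep,\alpha}_{u',\sigma'})$ factor is odd and gives zero to leading order, exactly as you yourself argue for the first-order term. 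What actually converges to the generator is the \emph{self-interaction of a single obstacle}: a single $\hat N^\ep_\alpha(d\sigma,du)$ integral with both $F^\ep_\alpha$ and $\nabla F^\ep_\alpha$ evaluated along the \emph{same} straight chord $\hat x^{\ep,\alpha}_{u,\sigma}$ (the paper's second limit in \eqref{twolimits}, computed in \eqref{DGLdiffusion2}). Your closing remark that ``no memory term survives except the diagonal one'' is the right physical intuition, but it contradicts your stated formula: the diagonal $\{u'=u\}$ has $\rho^\ep_\alpha\otimes\rho^\ep_\alpha$-measure zero, so replacing $N$ by $\rho$ twice loses the diagonal entirely. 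The diagonal contribution comes from the atomicity of the Poisson measure — in the paper's bookkeeping it appears as the jump part of $\Delta v^{\ep,\alpha}_-$ in \eqref{deltavem}, the piece involving $\Delta N^\ep_\alpha(u) = N^\ep_\alpha(S,[0,u]) - N^\ep_\alpha(S,[0,u))$ — and one must track it separately from the compensated (adapted, left-continuous) part $\Delta\tilde v^\ep_-$, which does vanish. Put differently, the diffusion coefficient is the second moment of a single-obstacle deflection, not a correlation between deflections at two Poisson-independent sites.

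Two smaller points. First, the paper's $\pm$ decomposition in \eqref{ypmdiff}--\eqref{deltavem} is not a future/past split: both $\Delta v^{\ep,\alpha}_+$ and $\Delta v^{\ep,\alpha}_-$ integrate over obstacle times $u' \le u$; the split is between the compensated bulk ($u'\in[s\wedge\tau^\ep,u]$, against $\hat\rho^\ep_\alpha$) and a thin window ($u'\in[u-4\ep^2,u]$, against $N^\ep_\alpha$, further split into adapted and jump parts). Second, you handle clustering via $\Lambda^\ep_n$ but never mention replacing $N^\ep_\alpha$ by $\hat N^\ep_\alpha$ (the point process that forgets obstacles inside the tube on self-crossings); that replacement is what restores the independence needed so that the compensated $\Delta\tilde v^\ep_-$ and $\Delta v^\ep_+$ terms really do have vanishing expectation against $\phi^\ep_s$ — without it the obstacle field seen near $u$ is correlated with the past of the trajectory. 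Fixing the identification of the surviving term (single-obstacle self-interaction via the jump part of $y^\ep_-$, not a $\rho\otimes\rho$ cross-term) and adding the $\hat N$ replacement would bring your proof in line with the paper's.
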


\begin{proof}
It suffices to prove \eqref{mg1} for linear and quadratic functions $f$; we restrict our attention here to the linear case. Using the decomposition \eqref{ABCDdecomp} we can see that the $B$, $C$, and $D$ terms are negligible in the limit, as follows. We  cut their expectation into two pieces, $S:= \{ \sup_{s \leq T} \mathcal{N}^{\ep,\alpha}_{n(\phi,a)} (s) \leq \lambda  \}$ and $S^c$; and on the first term we use the bound on the supremum of $B+C+D$ from \eqref{supBCD}, and on the second term we use the overlap lemma, \eqref{OL}:
\begin{equation*}
\begin{split}
\mathbb{E}^\ep[ (B & +C+D) \phi^\ep_s ]  \leq C\lambda \ep^\alpha + C' T \ep^{-1} \mathbb{E}^\ep[ \sup_{s \leq T} \mathcal{N}^\ep_{n(\phi,a)} (s) \chi\{ \sup_{s \leq T} \mathcal{N}^\ep_{n(\phi,a)} (s) > \lambda \} ] \\
& \leq C\lambda \ep^\alpha + C' T \ep^{-1} \int_\lambda^\infty P^\ep \{ \sup_{s \leq T} \mathcal{N}^\ep_{n(\phi,a)} (s) > \lambda' \} d\lambda' \\
& \leq C\lambda \ep^\alpha + C' T \ep^{-1} \int_\lambda^\infty \left( \frac{1+ 4\lambda'\ep^\alpha BT}{n\ep} + 1 \right)^2 \exp{ 32n^2 \rho \ep^{-2\alpha - 1} - \lambda' } d\lambda' \\
& \leq C\lambda \ep^\alpha + C' T \ep^{-1} h(\lambda, \ep,T) e^{-\lambda} \\
\end{split}
\end{equation*} 
Here, $h$ involves terms like $\lambda^2 T^2 \ep^{2\alpha - 2}$, and $\lambda$ can be chosen to be a small enough (for example, $\ep^{-\alpha + 3/2 + \beta} T^{-1 - \beta}$, with  $\beta$ small and positive) to get the right-hand side to vanish with $\ep$.
Hence only the term $A$ matters:
\begin{equation*}
\mathbb{E}^\ep [ (v^{\ep,\alpha}_t - v^{\ep,\alpha}_s) \phi_s(v^\ep) ]  \sim  \mathbb{E}^\ep \left[ \left( \int_S \int_{s \wedge \tau^\ep}^{t \wedge \tau^\ep} N^\ep_\alpha(d\sigma, du) \int_u^{t^\ep_{u, \sigma}} F^\ep(x^{\ep,\alpha}_{u, \sigma}(t')) dt' \right) \phi^\ep_s \right].
\end{equation*}

Next recall the decomposition of \eqref{ypmoDecomp}, $v^{\ep,\alpha}_t \sim y^\ep_+(t) + y^\ep_-(t) + y^\ep_0(t)$.
We define $\hat{y}^\ep_0(t)$ and $\hat{y}^\ep_{\pm}(t)$ by replacing $N^\ep_\alpha$ by $\hat{N}^\ep_\alpha$, the point process that ignores self-crossings, in their formulas; this results in negligible errors. Then we can combine this with the Taylor expansion in \eqref{TaylorExp}, and use the martingale property of $\hat{y}^\ep_0$ to arrive at:
\begin{equation}
\mathbb{E}^\ep [ (v^{\ep,\alpha}_t - v^{\ep,\alpha}_s) \phi^\ep_s]  \sim \mathbb{E}^\ep [ (\hat{y}^\ep_-(t) - \hat{y}^\ep_-(s) + \hat{y}^\ep_+(t) - \hat{y}^\ep_+(s) ) \phi^\ep_s ].
\end{equation}
Then we can use the formulas from the previous section, $\eqref{ypmdiff}$ and \eqref{deltavem}, to reduce the proof of the lemma to the following two limits:
\begin{equation}\label{twolimits}
\begin{split}
\lim_{\ep \to 0} \mathbb{E}^\ep & \left[ \left( \int_S \int_{s \wedge \tau^\ep}^{t \wedge \tau^\ep} \hat{N}^\ep_\alpha (d\sigma, du) [\Delta \hat{\tilde{v}}^\ep_- (u,\sigma) + \Delta \hat{v}^\ep_+(u,\sigma) ] \right) \phi^\ep_s \right] = 0 \\
\lim_{\ep \to 0} \mathbb{E}^\ep & \left[ \left( \int_S \int_{s \wedge \tau^\ep}^{t \wedge \tau^\ep} \hat{N}^\ep_\alpha (d\sigma, du) \int_u^{\hat{t}^\ep_{u, \sigma}} \int_u^{t'} \int_u^{t''} F^\ep(\hat{x}^{\ep,\alpha}_{u, \sigma}(t''')) \cdot \nabla F^\ep_\alpha (\hat{x}^{\ep,\alpha}_{u, \sigma}(t')) \right) \phi^\ep_s \right] \\
& = \lim_{\ep \to 0} \mathbb{E}^\ep \left[ \left( \int_{s \wedge \tau^\ep}^{t \wedge \tau^\ep} L v^{\ep,\alpha}_u du \right)  \phi^\ep_s \right].
\end{split}
\end{equation}

The first of these two limits can be handled by manipulations similar to \cite{DGL}, except with $\hat{\rho}^\ep_\alpha (d\sigma', du') = - \rho \ep^{-2\alpha} v^{\ep,\alpha}_u \cdot d\sigma du \vee 0$ and $\hat{x}^{\ep,\alpha}_{u', \sigma', u} (t) := v^{\ep,\alpha}_u (t - u') + \sigma' \ep.$

For the second of the two limits in \eqref{twolimits}, we examine the part without the infinitesimal operator. Using the martingale decomposition of $\hat{N}^\ep_\alpha$ and the definitions of $\hat{\rho}$, $F^\ep_\alpha$, and $\nabla F^\ep_\alpha$, we arrive at:
\begin{equation*}
\begin{split}
& \mathbb{E}^\ep \left[ \left( \int_S \int_{s \wedge \tau^\ep}^{t \wedge \tau^\ep} \hat{N}^\ep_\alpha (d\sigma, du) \int_u^{\hat{t}^\ep_{u, \sigma}} \int_u^{t'} \int_u^{t''} F^\ep_\alpha(\hat{x}^{\ep,\alpha}_{u, \sigma}(t''')) \cdot \nabla F^\ep_\alpha (\hat{x}^{\ep,\alpha}_{u, \sigma}(t')) \right) \phi^\ep_s \right]  \\
& \sim \mathbb{E}^\ep \left[ \left( \int_S \int_{s \wedge \tau^\ep}^{t \wedge \tau^\ep} \hat{\rho}^\ep_\alpha (d\sigma, du) \int_u^{\hat{t}^\ep_{u, \sigma}} \int_u^{t'} \int_u^{t''} F^\ep_\alpha(\hat{x}^{\ep,\alpha}_{u, \sigma}(t''')) \cdot \nabla F^\ep_\alpha (\hat{x}^{\ep,\alpha}_{u, \sigma}(t')) \right) \phi^\ep_s \right] \\
& = \mathbb{E}^\ep \left[ \left( \int_{s \wedge \tau^\ep}^{t \wedge \tau^\ep} -\rho \ep^{-2 \alpha} du \int_{ v^{\ep,\alpha}_u \cdot \sigma } v^{\ep,\alpha}_u d\sigma \int_u^{\hat{t}^\ep_{u, \sigma}} \int_u^{t'} \int_u^{t''} \ep^{\alpha - 1} F \left( \frac{v^{\ep,\alpha}_u(t''' - u)}{\ep} + \sigma \right) \right. \right. \\
& \quad \quad \quad \quad \quad \quad \quad \quad \quad \left. \left. \cdot \ep^{\alpha-2} \nabla F \left(  \frac{v^{\ep,\alpha}_u(t' - u)}{\ep} + \sigma  \right) \right) \phi^\ep_s \right] 
\end{split}
\end{equation*}
Now we manipulate the inner integrals to make them look like the desired operator:
\begin{equation}\label{DGLdiffusion2}
\begin{split}
- \int_{s \wedge \tau^\ep}^{t \wedge \tau^\ep}  \rho \ep^{-2 \alpha} & \int_{v_u \sigma \geq 0}  \int_{-\infty}^{\infty}  \int_{-\infty}^{t'} (t' - t'') F(v^{\ep,\alpha}_u t'' + \sigma) \cdot \nabla F(v^{\ep,\alpha}_u t' + \sigma) dt'' dt' d\sigma du \\
& = \int_{s \wedge \tau^\ep}^{t \wedge \tau^\ep}  \rho \ep^{-2 \alpha}  \int  \int_{-\infty}^{0} \tau F(r+ v^{\ep,\alpha}_u \tau) \cdot \nabla F(r)  d\tau d^2 r du \\
& = - \int_{s \wedge \tau^\ep}^{t \wedge \tau^\ep} \frac{1}{2} \rho \ep^{-2 \alpha}  
\int \int_{-\infty}^{\infty}  [\nabla_\xi \cdot F(r + \xi \tau) F(r) ]  \upharpoonright_{ \xi = v^{\ep,\alpha}_u } d\tau d^2 r du \\
& = - \frac{1}{2} \rho \int_{s \wedge \tau^\ep}^{t \wedge \tau^\ep} \ep^{-2 \alpha} \nabla_\xi \cdot \int \int_{-\infty}^{\infty} e^{ik \cdot \xi \tau} (k \otimes k) \abs{\hat{V}(\abs{k}) }^2 \upharpoonright_{ \xi = v^{\ep,\alpha}_u } d\tau d^2 k du \\
& = - \int_{s \wedge \tau^\ep}^{t \wedge \tau^\ep}  \rho \ep^{-2 \alpha} \nabla_\xi \cdot \int \pi \delta(k \cdot \xi) (k \otimes k) \abs{\hat{V}(\abs{k}) }^2 \upharpoonright_{ \xi = v^{\ep,\alpha}_u } d^2 k  du \\
& = - \int_{s \wedge \tau^\ep}^{t \wedge \tau^\ep}  \ep^{-2 \alpha} L v^{\ep,\alpha}_u du.
\end{split}
\end{equation}
\end{proof}

\section{Convergence in expectation}

We will show that, in particular, for an initial distribution $f_0$,
\begin{equation}
f^\ep_\alpha(t,x,v) = \mathbb{E} [ f_0(\Phi^{t}_{\alpha, \omega, \ep} (x,v)) ] \xrightarrow{\ep \to 0} h(t,x,v),
\end{equation}
where $h$ is the solution of the linear Landau equation, \eqref{LinearLandau}. Although this is a weaker mode of convergence than that just proved, the following argument illustrates the key intuition behind the result for the entire range of $\alpha$: Although there are significant numbers of clusters of obstacles, their total influence is actually negligible. The initial set-up is similar to \cite{DR} but differs after the first two steps.  

First, observe that $f^\ep_\alpha(t,x,v)$ can be written as (using $\abs{B(x)}$ to denote the measure of the ball of radius $t$ around the initial position $x$):
\begin{equation*}
f^\ep_\alpha(t,x,v) = e^{-\rho^\ep_\alpha \abs{B(x)}} \sum_{N \geq 0} \frac{(\rho^\ep_\alpha)^N}{N!} \int_{B(x)} \cdots \int_{B(x)} f_0(\Phi^{t}_{\alpha, \omega, \ep} (x,v)) dr_1 \dots dr_N.
\end{equation*}
Define a cut-off $\chi_1$, killing configurations that have an obstacle at the initial position:
\begin{equation*}
\chi_1(\omega) := \chi( \{ \omega = \{r_i\}_{i = 1}^N : \forall i = 1, \dots, N, \abs{ x - r_i } > \ep \} ).
\end{equation*}
Making this cut-off introduces an asymptotically vanishing error. That is, there exists a function $\phi_1(\ep) \to 0$ as $\ep \to 0$ such that $f^\ep_\alpha$  can be written as:
\begin{equation*}
e^{- \rho^\ep_\alpha \abs{B(x)}} \sum_{N \geq 0} \frac{(\rho^\ep_\alpha)^N}{N!} \int_{B(x)} \cdots \int_{B(x)} \chi_1(\omega) f_0(\Phi^{t}_{\alpha, \omega, \ep} (x,v)) dr_1 \dots dr_N + \phi_1(\ep).
\end{equation*}
Next we define another cut-off, $\chi_2$, killing configurations with obstacles that are not encountered by the light particle and thus have no effect on the trajectory:
\begin{equation*}
\chi_2(\omega) := \chi( \{ \omega = \{ r_i \}_{i=1}^M : \forall i = 1, \dots, M, r_i \in \mathcal{T}(t) \} ).
\end{equation*}
Here $\mathcal{T}(t)$ is the tube of radius $\ep$ around the light particle's trajectory: 
\begin{equation*}
\mathcal{T}(t) := \{ y : \exists s \in [0,t] \text{ such that } \abs{ y - x(s) } \leq \ep \}.
\end{equation*}
Then we have again an asymptotically vanishing error:
\begin{equation*}
f^\ep_\alpha(t,x,v) = e^{- \rho^\ep_\alpha \abs{ \mathcal{T}(t) } } \sum_{M \geq 0} \frac{ (\rho^\ep_\alpha)^M }{ M! } \int_{B(x)^N} \chi_1 \chi_2 (\omega) f_0 ( \Phi^{t}_{\alpha, \omega, \ep} (x,v) ) d\omega + \phi_2(\ep).
\end{equation*}
Next we make the key observation that single obstacles dominate the trajectory's path. First, we note that there is a sequence of thresholds, $\alpha_n$ converging to $1/2$ from below, such that for $\alpha < \alpha_n$, there is a negligible number of clusters of $n$ obstacles. For instance, $\alpha_2 = 1/4$, that is, only for $\alpha \geq 1/4$ do we need to worry about the numbers (if not influence) of doublets (clusters of $2$ obstacles). 

Let $n$ be the number of internal doublets up to time $t$; $n$ is a random variable with an expected value of order $\ep^{1-4\alpha}$. Let $\theta_j$ be the deflection angle for the collision with the $j$-th doublet.

Then the key observation is that $\mathbb{E}^\ep [\sum_{j=1}^n \theta_j] \to 0$ as $\ep \to 0$, and can be seen as follows. We introduce an expansion of the deflection angle at the $j$-th doublet (which comes from \eqref{thetaexpansion} in the next section): 
\begin{equation*}
 \theta_j = \ep^\alpha A^j_1 + \ep^{2\alpha} A^j_2 + O(\ep^{3\alpha}).
\end{equation*} 

We note that $\mathbb{E} \theta_j = 0$, even when the test particle hits a cluster of $n_j$ scatterers overlapping to form one large obstacle. This can be seen easily for the cluster which is actually just one scatterer, because $\theta_j$ is an odd function of the impact parameter. For $n_j > 1$, we assume without loss of generality that after leaving the previous cluster, the light particle's position is $x_j = 0$, and its velocity, $v_j = e_1$ (the first coordinate vector).  Let $r_i$ be the centers of the scatterers in the order that they are encountered, with $i$ between $1$ and $n_j$.  For each such configuration $R_j := \{r_i\}_{i = 1}^{n_j}$, there is an opposite configuration $\tilde{R}_j$, obtained by reflecting across the first coordinate axis, with corresponding scattering angle $\tilde{\theta}_j = -\theta_j$.  So $\theta_j$ is an odd function in this sense, and vanishes when integrated against an even measure.

Then we compute:
\begin{equation*}
\begin{split}
 \mathbb{E}^\ep \biggl[\sum_{j=1}^n \theta_j \biggr] & = \mathbb{E} \biggl[ \sum_{j=1}^n \ep^\alpha A^j_1+ \ep^{2\alpha} A^j_2 + O(\ep^{3\alpha})\biggr]\\
& = \mathbb{E} \biggl[ \sum_{j=1}^n \ep^{2\alpha} A^j_2 \biggr] + O(\ep^{3\alpha}) \\
& = \mathbb{E}[n] \mathbb{E}\bigl[ \ep^{2\alpha} A^j_2\bigr] \\
& = O(\ep^{1-4\alpha}) O(\ep^{2\alpha}).
\end{split}
\end{equation*}
And this last quantity goes to zero as $\ep \to 0$ because $\alpha \in (0, 1/2)$.

% soft sphere change of variables
\begin{figure}[htpb]
      \centering
      \includegraphics[viewport=0in 0in 7.5in 3.5in, keepaspectratio, width=5in,clip]{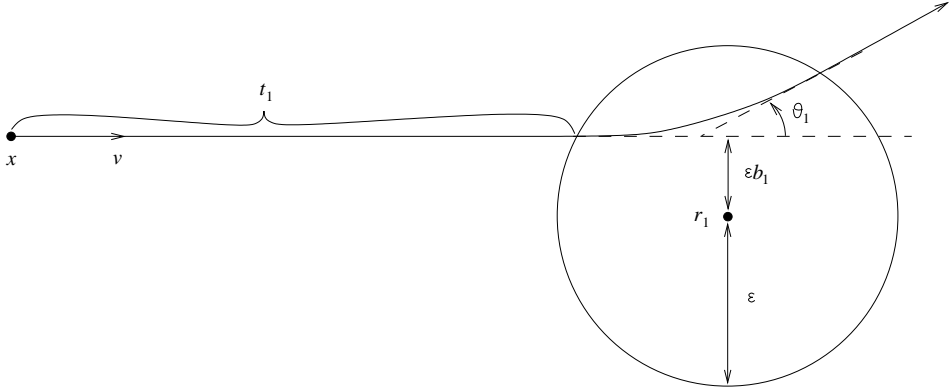}
      \caption{The changes of variables, from $r_j$ to $t_j$ and $b_j$, and from $b_j$ to $\theta_j$.}
      \label{softchangevars}
\end{figure}

Resuming along the lines of \cite{DR}, we introduce a change of variables $\mathcal{L}$, replacing obstacle locations by hitting times and impact parameters:
\begin{equation}
 \mathcal{L}: \{ r_j \}_{j=1}^n \mapsto \{ t_j, b_j \}_{j = 1}^n.
\end{equation} 
The difference is that here, we define the domain of $\mathcal{L}$ to be the set of all trajectories that do not start on a scatterer; that do not have extraneous scatterers; and that have stopping time $\tau^\ep > t$ (i.e., they have no almost-tangential self-intersections, not too many self-intersections, and velocity that does not change too much). In particular, we handle the change of variables at a doublet as illustrated in figure \ref{changevarsdoublet}. 

% soft sphere change of variables for doublet
\begin{figure}[htpb]
      \centering
      \includegraphics[viewport=0in 0in 5.1in 2.9in, keepaspectratio, width=4in,clip]{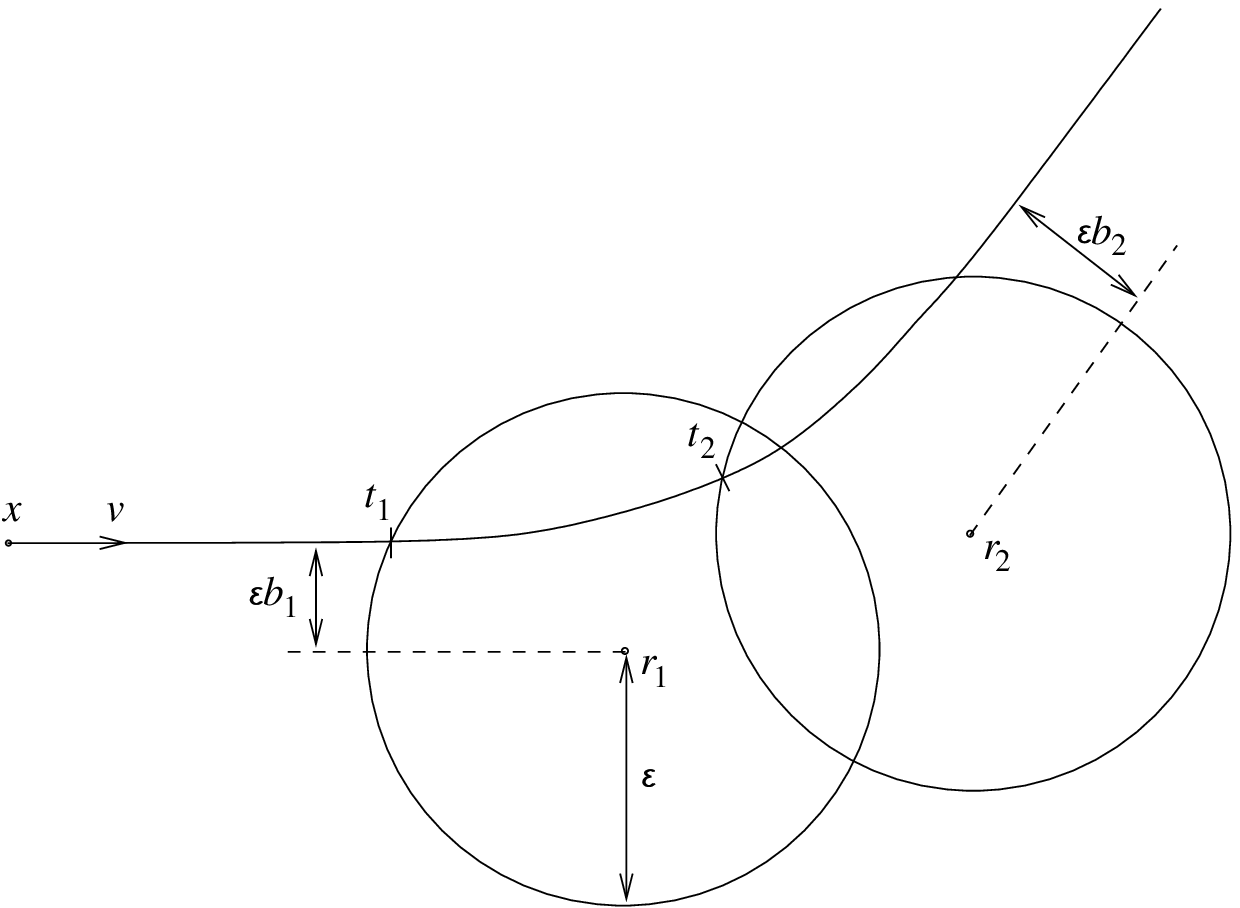}
      \caption{The change of variables for a doublet.}
      \label{changevarsdoublet}
\end{figure}

Then we can write $f^\ep_\alpha$ using the change of variables and \eqref{stoppingtimeproperties}, with an error $\phi_3(\ep)$ vanishing as $\ep \to 0$, as:
\begin{equation*}
e^{- \rho^\ep_\alpha \abs{ \mathcal{T}(t) } } \sum_{M \geq 0} (\rho^\ep_\alpha)^M  \int_{\triangle} \int_{\square_\ep}  \chi(\{ t_i, b_i \}_{i = 1}^M  \notin Range(\mathcal{L}) ) f_0 ( \Phi^{t}_{\alpha, \omega, \ep} (x,v) ) d\bar{b} d\bar{t}  + \phi_3(\ep).
\end{equation*}
Here $\bar{t}$ stands in for $(t_1, \dots, t_k)$, and $\triangle := \{ \bar{t} : t_1 \in (0,t), t_2 \in (t_1,t), \dots, t_k \in (t_{k-1},t) \}$. Similarly, $\bar{b} := (b_1, \dots, b_k)$, and $\square_\ep := (-\ep,\ep) \times \cdots \times (-\ep, \ep)$.

Then the procedure of \cite{DR} can be followed and a change of variables made (with a slight modification for doublets, described below), from impact parameters $\{ b_i \}$ to deflection angles $\{ \theta_i \}$ (see figure \ref{softchangevars}), which has Jacobian determinant:
\begin{equation*}
 \prod_{i=1}^M \ep^{1+2\alpha} \Gamma_\ep(\theta_i) := \prod_{i=1}^M \frac{ db_i }{ d\theta_i }.
\end{equation*}
Here, $\Gamma_\ep(\theta_i)$ is the rescaled scattering cross section.
We use the fact that the deflection angle $\theta$ through a doublet can be approximated: $\theta = \theta_{1} + \theta_{2} + \phi(\ep).$ Here $\theta_{1}$ and $\theta_{2}$ are the deflection angles corresponding to $b_1$ and $b_2$ in figure \ref{changevarsdoublet}, and $\phi(\ep)$ is a small error vanishing as $\ep \to 0$.

Using also $R_\theta (v)$ to denote the rotation of the vector $v$ by angle $\theta$, we make the following polygonal approximation to the trajectory:
\begin{equation*}
x(t) = x + \sum_{i = 0}^M R_{\theta_1 + \cdots + \theta_i} (v) (t_{i+1} - t_i) + O(M\ep).
\end{equation*} 

Additionally, we approximate $\abs{\mathcal{T}(t)}$ by $2 \ep t$. Then we can rewrite $f^\ep_\alpha$:
\begin{equation*}
\begin{split}
f^\ep_\alpha(t,x,v)  &= e^{- \rho \ep^{-1-2\alpha} 2 \ep t } \sum_{M \geq 0} \rho^M (\ep^{-1-2\alpha})^M  \int_{\triangle} \int_{\square_\pi}  \prod_{i=1}^M \ep^{1+2\alpha} \Gamma_\ep(\theta_i) \\
& \quad \quad \quad \times f_0 \biggl( x + \sum_{i = 0}^M R_{\theta_1 + \cdots + \theta_i} (v) (t_{i+1} - t_i), R_{\theta_1 + \cdots + \theta_M}(v) \biggr) d\bar{\theta} d\bar{t} + \phi_4(\ep)\\
&= e^{- t \rho \int_{-\pi}^{\pi} \Gamma_\ep(\theta) d\theta } \sum_{M \geq 0} \rho^M  \int_{\triangle} \int_{\square_\pi}  \prod_{i=1}^M \Gamma_\ep(\theta_i) \\
& \quad \quad \quad \times  f_0 \biggl( x + \sum_{i = 0}^M R_{\theta_1 + \cdots + \theta_i} (v) (t_{i+1} - t_i), R_{\theta_1 + \cdots + \theta_M}(v) \biggr) d\bar{\theta} d\bar{t} + \phi_4(\ep)
\end{split}
\end{equation*}
We then recognize this expansion as the series form of a solution to the family of Boltzmann equations:
\begin{equation*}
\begin{split}
	(\partial_t + v \cdot \nabla_x) h_\ep(t,x,v) & = \rho \int_{-\pi}^{\pi} \Gamma_\ep(\theta)[h_\ep (t,x, R_\theta(v)) - h_\ep(t,x,v)] d\theta \\
	h_\ep(0,x,v) & = f_0(x,v).
\end{split}
\end{equation*}
Finally, the $h_\ep$ converge in the appropriate sense to $h$, the solution of the Landau equation \eqref{LinearLandau}, because the scattering cross sections $\Gamma_\ep$ concentrate on grazing collisions.

We note that $\mathbb{E} \theta_j = 0$, even when the test particle hits a cluster of $n_j$ scatterers overlapping to form one large obstacle. This can be seen easily for the case of hitting a lone scatterer, because $\theta_j$ is an odd function of the impact parameter. For $n_j > 1$, we assume without loss of generality that after leaving the previous cluster of scatterers, the light particle's position is $x_j = 0$, and its velocity, $v_j = e_1$ (the first coordinate vector).  Let $r_i$ be the centers of the scatterers in the order that they are encountered, with $i$ between $1$ and $n_j$.  For each such configuration $R_j := \{r_i\}_{i = 1}^{n_j}$, there is an opposite configuration $\tilde{R}_j$, obtained by reflecting across the first coordinate axis, with corresponding scattering angle $\tilde{\theta}_j = -\theta_j$.  So $\theta_j$ is an odd function in this sense, and when integrated against an even measure, vanishes.

\section{The diffusion constant}

We take the following as the definition of the diffusion constant:
\begin{equation}\label{defnzeta}
\zeta := \lim_{\ep \to 0} \frac{\rho}{2} \int_{-\pi}^{\pi} \theta^2 \Gamma_\ep(\theta) d\theta = \lim_{\ep \to 0} \ep^{- 2\alpha} \frac{\rho}{2} \int_{-1}^{1} \theta(b)^2 db.
\end{equation}

\begin{proposition}
The diffusion constant, $\zeta$, defined above, is independent of $\alpha \in (0, 1/2)$ and can be expressed by the following formula:
\begin{equation}\label{zeta}
\zeta = \frac{\rho}{2} \int_{-1}^1 \left( \int_b^1 V' \left( \frac{\abs{b}}{u} \right) \frac{b}{u} \frac{du}{\sqrt{1-u^2}} \right)^2 db.
\end{equation}
\end{proposition}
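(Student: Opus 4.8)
The plan is to compute the small-deflection asymptotics of the scattering angle $\theta(b)$ for a single scatterer and show that the $\ep^{-2\alpha}$ prefactor in \eqref{defnzeta} exactly cancels the $\ep^{2\alpha}$ coming from the weak rescaled potential $V^\ep_\alpha(x) = \ep^\alpha V(|x|/\ep)$, leaving an $\alpha$-independent limit. Since we have already shown (in Section 5) that clusters of two or more obstacles contribute negligibly, and that $\mathbb{E}\theta_j=0$ for any cluster, it suffices to work with the cross section $\Gamma_\ep$ coming from lone scatterers, for which $\theta(b)$ is an honest one-obstacle scattering angle with impact parameter $b \in (-1,1)$ (after rescaling so the scatterer has unit range).

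First I would write down the exact scattering integral. For a particle of speed $1$ entering the unit-range potential $V$ with impact parameter $b$, the classical deflection angle is $\theta(b) = -\int (\partial_\perp V)\, dt$ along the true trajectory, plus higher-order corrections; since the coupling in the rescaled problem is $\ep^\alpha$, the true angle is $\theta^\ep(b) = \ep^\alpha \theta_1(b) + \ep^{2\alpha}\theta_2(b) + O(\ep^{3\alpha})$, which is exactly the expansion \eqref{thetaexpansion} referenced in the text with $A_1^j = \theta_1$, $A_2^j = \theta_2$. The leading term $\theta_1(b)$ is the Born (straight-line) approximation: integrating the transverse force $-V'(|x|)\,b/|x|$ along the line $x(s) = (s, b)$ and using $|x| = \sqrt{s^2+b^2}$ gives, after the substitution $|x| = |b|/u$ (so that $u = |b|/\sqrt{s^2+b^2} \in (0,1]$ parametrizes the ray and $u=1$ at closest approach), precisely
\begin{equation*}
\theta_1(b) = -\int_{|b|}^{1} V'\!\left(\frac{|b|}{u}\right)\frac{b}{u}\,\frac{du}{\sqrt{1-u^2}},
\end{equation*}
matching the inner integral in \eqref{zeta} up to sign. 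Then plug the expansion into \eqref{defnzeta}: $\ep^{-2\alpha}\theta^\ep(b)^2 = \theta_1(b)^2 + O(\ep^\alpha)$, so the limit as $\ep\to 0$ is $\frac{\rho}{2}\int_{-1}^1 \theta_1(b)^2\,db$, which is the claimed formula \eqref{zeta}, and it visibly does not depend on $\alpha$.

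The main obstacle is controlling the trajectory-vs-straight-line error rigorously: one must show that replacing the true trajectory by the incoming straight line in the force integral costs only $O(\ep^{2\alpha})$ pointwise in $b$ and, crucially, that the remainder is uniformly integrable in $b\in(-1,1)$ so that $\int \ep^{-2\alpha}\theta^\ep(b)^2\,db \to \int \theta_1(b)^2\,db$. The delicate region is $b$ near $0$ (near head-on collision) and $b$ near $\pm 1$ (grazing), where $1/\sqrt{1-u^2}$ is integrable but the geometry degenerates; here the compact support and smoothness of $V$, together with the velocity cut-off $\tau_v$ keeping $|v|$ near $1$, give the needed bounds. I would handle this by a Gronwall estimate on $|x^\ep(s) - x_{\mathrm{line}}(s)|$ inside the scatterer, which is $O(\ep^\alpha)$ over the $O(1)$ transit time, feeding back into an $O(\ep^{2\alpha})$ correction to $\theta$.

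Finally I would check the consistency claim that the formula \eqref{zeta} agrees with the diffusion constants previously written down by \cite{DR} (who approximate via Boltzmann, so their constant is $\frac{\rho}{2}\int\theta^2$ of the scattering cross section in the grazing limit) and by \cite{DGL} (who get a constant of the Fourier form $\pi\rho\int d^2k\,(k\otimes k)\delta(k\cdot v)|\hat V(|k|)|^2$ appearing in the generator \eqref{infgenerator}). The bridge is Plancherel: writing $\theta_1(b)$ as a straight-line integral of $\nabla V$ and squaring, then integrating over $b$ and over the longitudinal variable, reproduces exactly the $\delta(k\cdot v)$ Fourier expression — this is the same computation already carried out in the chain \eqref{DGLdiffusion2} in Section 4, so I would simply point to it. That closes the proof; the only genuinely new content beyond bookkeeping is the uniform-in-$b$ error control described above.
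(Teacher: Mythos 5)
Your proposal follows the same route as the paper's Section 6: compare the true trajectory to the incoming straight line, identify the leading transverse kick $X_1 = \int F^\ep_\alpha(\hat x(s))\,ds$ with the Born integral, change variables via $u = |b|/\sqrt{b^2+\xi^2}$ to obtain the inner integral in \eqref{zeta}, observe that $|X_1|^2 \sim \ep^{2\alpha}$ cancels the $\ep^{-2\alpha}$ in \eqref{defnzeta}, and point to the Plancherel computation \eqref{DGLdiffusion2} for agreement with \cite{DGL}. The only point you elide that the paper makes explicit is the verification (via integration by parts, using $X_1 \perp v^-$ and energy conservation) that $\theta^2 = |X_1|^2 + O(\ep^{3\alpha})$ rather than merely $\theta \approx |X_1|$; your Gronwall remark and uniform-in-$b$ integrability concern are correct refinements consistent with the paper's pointwise estimates.
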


\begin{proof}
We do several expansions to compute the diffusion constant. First, we compare the deflection through one obstacle, centered at the origin, to the path that would be taken if the obstacle weren't there. Let the entry time, place, and velocity be $0$, $x^- = x(0)$, and $v^-$; the deflected exit time, position, and velocity be $\tau$, $x^+$, $v^+$; and the non-deflected (straight-line) exit time, position, and velocity be $\hat{\tau}$, $\hat{x}^+$, and $\hat{v}^+$. (See figure \ref{comparestraight}.)

% compare straight-line trajectory to deflected
\begin{figure}[htpb]
      \centering
      \includegraphics[viewport=0in .8in 3in 2.3in, keepaspectratio, width=2.7in,clip]{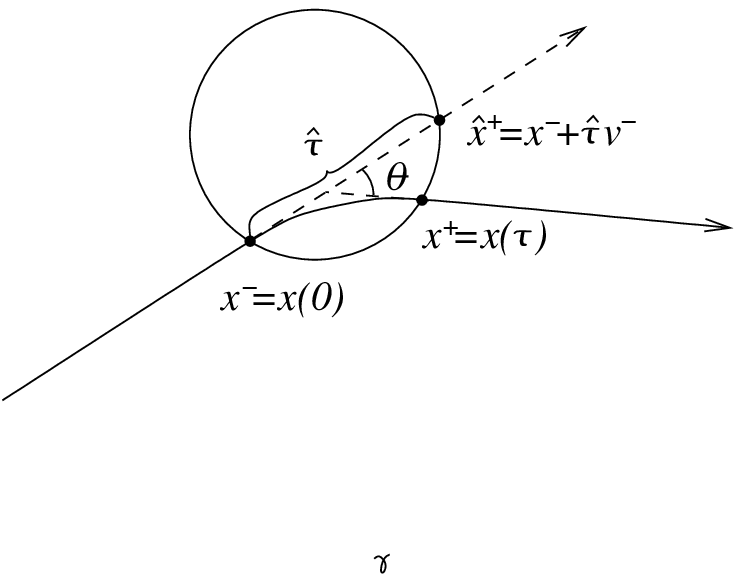}
      \caption{The straight-line trajectory exiting at time $\hat{\tau}$ and the deflected trajectory, at $\tau$, with the obstacle's center at the origin.}
      \label{comparestraight}
\end{figure}

We desire an estimate for $\abs{\tau - \hat{\tau}}$, so we examine the identity:
\begin{equation*}
\begin{split}
\ep^2 = \abs{\hat{x}^+}^2 & = \abs{x^-}^2 + \hat{\tau}^2 + 2 \tau x^- \cdot v^- \\
& = \ep^2  + \hat{\tau}^2 + 2 \tau x^- \cdot v^-.
\end{split}
\end{equation*}
This identity gives $\hat{\tau} = -2 x^- \cdot v^- = O (\ep)$, as we may assume that $\abs{v^-} = 1$. Similarly, $\tau = -2 x^- \cdot ( v^- + O(\ep^\alpha))$, so we have the estimate:
 \begin{equation}\label{tauhat}
 \abs{\hat{\tau} - \tau} = O(\ep^{1+\alpha}).
\end{equation}

Next, we estimate $v^+ - v^-$, using the definition $F^\ep_\alpha (y) := -\nabla V^\ep(y)$. We could make the following estimate (but we will actually do better):
\begin{equation*}
\begin{split}
v^+ - v^- & = \int_0^\tau F^\ep_\alpha (x(s)) ds \\
& = \int_0^{\hat{\tau}} F^\ep_\alpha (x(s)) ds +  \int_{\hat{\tau}}^\tau F^\ep_\alpha (x(s)) ds \\
& =  \int_0^{\hat{\tau}} F^\ep_\alpha (x(s)) ds + O(\ep^{1+\alpha}) \cdot O(\ep^{\alpha - 1}) \\
& = \int_0^{\hat{\tau}} F^\ep_\alpha (x(s)) ds + O(\ep^{2\alpha}).
\end{split}
\end{equation*}
Accordingly, we estimate that for $s$ between $\tau$ and $\hat{\tau}$, and some $a$ between $\tau$ and $s$, and using the compact support of $V$:
\begin{equation}\label{Fe}
\begin{split}
F^\ep_\alpha(x(s)) & = F^\ep_\alpha(x(\tau)) + \nabla F^\ep_\alpha(a)(x(s) - x(\tau)) \\
& = 0 + O(\ep^{ \alpha - 2}) \cdot O(\ep^{1+ \alpha}) \\
& = O(\ep^{-1 + 2\alpha}).
\end{split}
\end{equation}

We now combine \eqref{Fe} with \eqref{tauhat}, the estimate for $\abs{\hat{\tau} - \tau}$. Instead of the previous remainder term, $O(\ep^{2\alpha})$, we get:
\begin{equation*}
 \int_{\hat{\tau}}^\tau F^\ep_\alpha (x(s)) ds = O(\ep^{1+\alpha}) \cdot O(\ep^{-2 + \alpha}) \cdot O(\ep^{1+\alpha}) = O(\ep^{3\alpha}).
\end{equation*}
Thus:
\begin{equation}\label{vplusvminus}
v^+ - v^-  = \int_0^{\hat{\tau}} F^\ep_\alpha (x(s)) ds + O(\ep^{3\alpha}).
\end{equation}

Expand $F^\ep_\alpha(x(s))$, using $\hat{x}(s) = x^- + t v^-$, and rename the first two terms:
\begin{equation*}
\begin{split}
 F^\ep_\alpha (x(s)) & = F^\ep_\alpha (\hat{x}(s)) + \nabla F^\ep_\alpha (\hat{x}(s)) (x(s) - \hat{x}(s)) \\
 & \quad \quad + \frac{1}{2} D^2 F^\ep_\alpha (a) (x(s) - \hat{x}(s)) \cdot (x(s) - \hat{x}(s)) \\
& =: Y_1 + Y_2 + O(\ep^{-3+\alpha}) O(\ep^{1+ \alpha})^2 
\end{split}
\end{equation*}

To analyze $Y_2$ further, observe that:
\begin{equation*}
\begin{split}
 \nabla F^\ep_\alpha (\hat{x}(s)) (x(s) - \hat{x}(s)) & = \nabla F^\ep_\alpha (\hat{x}(s))
  \int_0^s [ v(t) - v^- ] dt  \\
& =  \nabla F^\ep_\alpha (\hat{x}(s))  \int_0^s \int_0^t F^\ep_\alpha(\hat{x}(u))du dt  \\
& \quad \quad + \nabla F^\ep_\alpha (\hat{x}(s))  \int_0^s \int_0^t [ F^\ep_\alpha(x(u)) - F^\ep_\alpha(\hat{x}(u)) ] du dt \\
& =: Y_{21} + Y_{22}.
\end{split}
\end{equation*}

Regarding the first term, $Y_{21}$, we integrate out $t$:
\begin{equation*}
\begin{split}
X_{21} : = \int_0^{\hat{\tau}} Y_{21} ds & = \int_0^{\hat{\tau}} \nabla F^\ep_\alpha (\hat{x}(s))  \int_0^s \int_0^t F^\ep_\alpha(\hat{x}(u)) du dt ds\\
& = \int_0^{\hat{\tau}} \int_0^s (s - u) \nabla F^\ep_\alpha (\hat{x}(s)) F^\ep_\alpha(\hat{x}(u)) du ds \\
& = O( \ep ^3) O( \ep^{ \alpha - 2})O( \ep^{ \alpha - 1}) = O( \ep^{ 2\alpha}).
 \end{split}
\end{equation*}
As for the second term in this decomposition, $Y_{22}$:
\begin{equation*}
\begin{split}
Y_{22} & =  \nabla F^\ep_\alpha (\hat{x}(s))  \int_0^s \int_0^t F^\ep_\alpha(x(u)) - F^\ep_\alpha(\hat{x}(u))du \\
& = O(\ep^{-2+\alpha}) O(\ep^2) O(\ep^{-2+\alpha} \ep^{1+\alpha}) \\
& = O(\ep^{-1 + 3\alpha}).
\end{split}
\end{equation*}
Then integrating, we get:
\begin{equation*}
X_{22} : = \int_0^{\hat{\tau}} Y_{22} ds = O(\ep^{3\alpha}).
\end{equation*}
Thus we have, combining the estimates for $Y_1$, $Y_{21}$, and $Y_{22}$:
\begin{equation*}
 v^+ - v^- = X_1 + X_{21} + O(\ep^{3\alpha}).
\end{equation*}

% change of variables for straight-line trajectory
\begin{figure}[htpb]
      \centering
      \includegraphics[viewport=.5in .38in 3in 2.3in, keepaspectratio, width=1.9in,clip]{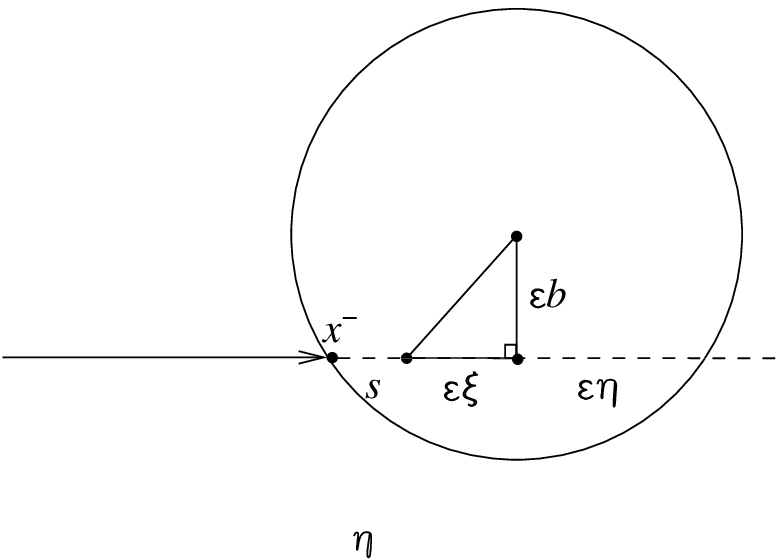}
      \caption{Change of variables from $s$ to $\xi$, which is the distance remaining to the center of the crossing (scaled by $\ep$).}
      \label{bxichangevars}
\end{figure}

Now we claim that $X_1 := \int_0^{\hat{\tau}} Y_1 = \int_0^{\hat{\tau}} F^\ep_\alpha(\hat{x}(s)) ds$ is perpendicular to $v^-$.
Without loss of generality we may assume that $v^- = (1,0).$ Making the following change of variables, $(s,b) \mapsto (\xi, b)$, as in figure \ref{bxichangevars}, we see that the claim is true:
\begin{equation*}
\begin{split}
 \int_0^{\hat{\tau}} Y_{1} & = - \int_0^{\hat{\tau}} V' \left( \frac{\abs{x^- + s v^-}}{\ep} \right) \frac{x^- + s v^-}{\abs{x^- + s v^-}} ds \\
& = - \ep^\alpha \int_{- \eta}^{ \eta} V'( \sqrt{b^2 + \xi^2} ) \cdot \frac{1}{\sqrt{b^2 + \xi^2}} (\xi, b) d\xi.
 \end{split}
\end{equation*}
Then we introduce a variable $u$ such that:
\begin{equation*}
 \frac{\abs{b}}{u} = \sqrt{b^2 + \xi^2},
  \quad \text{and thus} \quad \frac{\abs{b}}{u^2} du = \frac{- \abs{b} \xi d\xi}{ \sqrt{b^2 + \xi^2} }.
 \end{equation*}
 Using this change of variables, and $u_0 := \abs{b} / \sqrt{\eta^2 + b^2} = \abs{b}$,
\begin{equation}\label{X1bound}
\begin{split}
 \bigabs{ \int_0^{\hat{\tau}} Y_{1} } & = \bigabs{ \ep^\alpha \int_{-\eta}^{\eta} V'( \sqrt{b^2 + \xi^2} ) \cdot \frac{1}{\sqrt{b^2 + \xi^2}} b d\xi } \\
 & = \bigabs{ 2 \ep^\alpha \int_{0}^{\eta} V'( \sqrt{b^2 + \xi^2} ) \cdot \frac{1}{\sqrt{b^2 + \xi^2}} b d\xi } \\
& = \bigabs{ 2 \ep^\alpha \int_{1}^{u_0} V' \left( \frac{\abs{b}}{u} \right)   \frac{u} {\abs{b}}  b  \left( - \frac{\abs{b}^2}{u^3}\frac{u}{\abs{b} \sqrt{1 - u^2} } \right) du } \\
& = \bigabs{ 2 \ep^\alpha \int_{u_0}^{1} V' \left( \frac{\abs{b}}{u} \right) \frac{b}{u} \frac{du}{ \sqrt{1 - u^2} } }.
\end{split}
\end{equation}

So if we examine the deflection angle, $\theta$, as pictured in figure \ref{comparestraight}, we obtain:
\begin{equation*}
 1 - \frac{1}{2} \theta^2 + O(\ep^{3\alpha}) = \cos{\theta} = v^+ \cdot v^- = (v^+ - v^-) \cdot v^- + 1.
 \end{equation*}

Then orthogonality of $X_1$ with $v^-$ implies:
\begin{equation}\label{thetaexpansion}
\begin{split}
\theta^2 & = 2 ( X_1 + X_{21} ) \cdot v^- + O(\ep^{3\alpha}) \\
& = - 2 \int_0^{\hat{\tau}} v^- \cdot \nabla F^\ep_\alpha (\hat{x}(s))  \int_0^s (s - u) F^\ep_\alpha(\hat{x}(u)) du ds  \\
& = - 2 \int_0^{\hat{\tau}} \frac{d}{ds} F^\ep_\alpha (\hat{x}(s))  \int_0^s (s - u)  F^\ep_\alpha(\hat{x}(u))du ds  \\
& = - 2 \left[ F^\ep_\alpha (\hat{x} (s)) \int_0^s (s - u) F^\ep_\alpha(\hat{x}(u)) du \right]_0^{\hat{\tau}} \\
& \quad \quad + \int_0^{\hat{\tau}} F^\ep_\alpha (\hat{x}(s)) \frac{d}{ds} \int_0^s (s - u) F^\ep_\alpha(\hat{x}(u)) du  ds  \\
& = 0 + 2 \int_0^{\hat{\tau}} F^\ep_\alpha (\hat{x}(s)) \int_0^s F^\ep_\alpha(\hat{x}(u)) du ds \\
& =  2 \int_0^{\hat{\tau}} F^\ep_\alpha (\hat{x}(s)) F^\ep_\alpha(\hat{x}(u)) \chi_{ \{u < s\} } (u, s) du ds\\
& = \left \lvert \int_0^{\hat{\tau}} F^\ep_\alpha (\hat{x}(s)) ds \right \rvert^2 \\
& = \abs{ X_1 }^2.
\end{split}
\end{equation}
We note that in particular, \eqref{thetaexpansion} gives us an expansion of the deflection angle with leading term of order $\ep^{\alpha}$. In terms of the difference in velocities:
\begin{equation}
\begin{split}
v^+ - v^- & = X_1 + X_{21} + O(\ep^{3\alpha}) \\
& = X_1 - \frac{1}{2} \abs{X_1}^2 +  O(\ep^{3\alpha}) \\
& = X_1 - \frac{1}{2} \theta^2 +  O(\ep^{3\alpha}) \\
& =  O(\ep^{\alpha}) +  O(\ep^{2\alpha}) +  O(\ep^{3\alpha}).
\end{split}
\end{equation}
The proof is finished by putting this expression for $\theta^2$ into the definition of the diffusion constant, \eqref{defnzeta}, and using \eqref{X1bound}. 
 \end{proof}

\begin{lemma}
The diffusion constant for $\alpha \in (0, 1/2)$ is the same as for the case $\alpha = 1/2$ in \cite{DGL}, whose formula via the martingale characterization was:
\begin{equation}\label{DGLdiffconst}
\zeta := \pi \rho \abs{ v(0) }^{-1} \int \abs{k}^2 \abs{\hat{V}(\abs{k}) }^2 d\abs{k}.
\end{equation} 
\end{lemma}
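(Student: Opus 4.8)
The plan is to show that the two formulas for $\zeta$ — the one from \eqref{zeta} (obtained via the scattering-cross-section definition \eqref{defnzeta}) and the one in \eqref{DGLdiffconst} (obtained from the martingale characterization, with $\abs{v(0)} = 1$) — are in fact equal, thereby reconciling the $\alpha \in (0,1/2)$ computation with the $\alpha = 1/2$ result of \cite{DGL}. Since the Proposition already establishes that \eqref{zeta} is the value of $\zeta$ for every $\alpha \in (0,1/2)$, and since the convergence proof in Section 4 identifies the generator via \eqref{infgenerator}, which in polar coordinates is $\zeta \, \partial^2/\partial\theta^2$, it is enough to verify the single real-variable identity
\begin{equation*}
\frac{\rho}{2} \int_{-1}^1 \left( \int_{\abs{b}}^1 V'\left(\frac{\abs{b}}{u}\right) \frac{b}{u} \frac{du}{\sqrt{1-u^2}} \right)^2 db
= \pi \rho \int_0^\infty \abs{k}^2 \abs{\hat{V}(\abs{k})}^2 \, d\abs{k}.
\end{equation*}

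I would prove this by starting from the right-hand side and working backwards, essentially reversing the Fourier-side computation already carried out in \eqref{DGLdiffusion2}. That chain of equalities shows that $\ep^{-2\alpha} L v_u^{\ep,\alpha}$ arises both from the double time-integral of $F \cdot \nabla F$ along a straight trajectory and from the expression $\pi\rho\,\nabla_\xi \cdot \int \delta(k\cdot\xi)(k\otimes k)\abs{\hat V(\abs k)}^2 d^2k$; so the content is the scalar identity obtained by pairing $L$, as the Laplace--Beltrami operator $\zeta\,\partial_\theta^2$ on the circle $\abs{v} = 1$, against a coordinate function. Concretely: first, rewrite $\pi\rho\int d^2k\,(k\otimes k)\,\delta(k\cdot p)\,\abs{\hat V(\abs k)}^2$ in polar coordinates $k = (\varrho\cos\psi, \varrho\sin\psi)$, using $\delta(k\cdot p) = \delta(\varrho\cos\psi)$ (taking $p = e_1$), which localizes $\psi$ to $\pm\pi/2$ and contributes a Jacobian factor $1/\varrho$; this collapses the tensor $k\otimes k$ to $\varrho\,(e_2\otimes e_2)$ and leaves $\pi\rho\int_0^\infty \varrho^2\abs{\hat V(\varrho)}^2 d\varrho$ times the angular second-derivative structure — matching \eqref{DGLdiffconst}. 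Second, on the other side, use the change of variables from Section 6: $X_1 = \int_0^{\hat\tau} F_\alpha^\ep(\hat x(s))\,ds$ is the total transverse impulse for impact parameter $b$, and \eqref{thetaexpansion} gives $\theta(b)^2 = \abs{X_1}^2$, while \eqref{X1bound} expresses $\abs{X_1}$ as $2\ep^\alpha\int_{\abs b}^1 V'(\abs b/u)\,\frac{b}{u}\,\frac{du}{\sqrt{1-u^2}}$. Substituting into $\zeta = \lim_{\ep\to 0}\ep^{-2\alpha}\frac{\rho}{2}\int_{-1}^1\theta(b)^2\,db$ recovers the left-hand side, so the whole statement reduces to showing these two closed forms agree.

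The bridge between the two closed forms is the classical fact that the straight-line transverse impulse is the one-dimensional partial Fourier (Abel-type) transform of the radial force profile. Explicitly, writing $V(\abs x) = \int \hat V(\abs k) e^{ik\cdot x} d^2k$ and differentiating, the transverse ($e_2$) component of $\int_{-\infty}^\infty -\nabla V(b e_2 + \xi e_1)\,d\xi$ equals $\int_{-\infty}^\infty\int d^2k\,(-i k_2)\hat V(\abs k)e^{ik_2 b}e^{ik_1\xi}\,d\xi = 2\pi\int_{-\infty}^\infty(-ik_2)\hat V(\sqrt{k_1^2+k_2^2})\,e^{ik_2 b}\,\delta(k_1)\,dk_1$; the remaining one-dimensional integral over $k_2$ is, after using the Parseval identity in the variable $b$, exactly what turns $\frac{\rho}{2}\int_{-1}^1 \abs{X_1/\ep^\alpha}^2 db$ into $\pi\rho\int_0^\infty \varrho^2\abs{\hat V(\varrho)}^2\,d\varrho$. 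So the concrete steps are: (i) apply Plancherel in $b$ to $\frac{1}{2}\int\theta(b)^2 db = \frac{1}{2}\int\abs{X_1}^2 db$; (ii) identify the Fourier transform (in $b$) of the transverse impulse with $2\pi(-ik_2)\hat V(\abs{k_2})$ supported on the line $k_1 = 0$; (iii) carry out the resulting $k_2$-integral and compare with \eqref{DGLdiffconst} written in polar coordinates as in the first paragraph; and (iv) remark that \eqref{zeta} and the $u$-integral in \eqref{X1bound} are literally the same object, so no further work is needed on that side.

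I expect the main obstacle to be purely bookkeeping rather than conceptual: keeping the constants, the factors of $2\pi$ versus $\pi$, and the symmetry factors ($\int_{-1}^1$ versus $2\int_0^1$, the $\frac{1}{2}$ in the definition of $\zeta$, the one-sided $\int_0^\infty d\abs k$ versus the full $\int d^2 k$) all consistent through the delta-function evaluation and the Plancherel step. A secondary point requiring a word of justification is that the $b$-integral may be taken over all of $\mathbb{R}$ rather than just $[-1,1]$: since $V$ is supported in the unit ball, $X_1 = 0$ for $\abs b \geq 1$, so extending the domain is harmless and is exactly what makes the Plancherel identity in $b$ clean. Once these normalizations are pinned down, the identity $\frac{\rho}{2}\int_{-1}^1(\cdots)^2 db = \pi\rho\int_0^\infty\abs k^2\abs{\hat V(\abs k)}^2 d\abs k$ follows, and with it the equality of the diffusion constants for all $\alpha\in(0,1/2]$. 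This also gives, as an aside, that the formula of \cite{DR} — which is \eqref{zeta} — and that of \cite{DGL} — which is \eqref{DGLdiffconst} — agree, completing the unification promised in the introduction.
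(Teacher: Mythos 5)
Your proposal is correct and follows essentially the same route as the paper: the paper's proof integrates $Y_1 + Y_{21}$ over the impact point, changes variables to $(r,\tau)$, and applies Plancherel to land on the Fourier expression of \eqref{DGLdiffusion2}, which (via $\int e^{ik\cdot p\tau}d\tau = 2\pi\delta(k\cdot p)$) collapses to \eqref{DGLdiffconst}. Your version — Plancherel in $b$ applied to $\abs{X_1}^2 = \theta(b)^2$, with the longitudinal $\xi$-integral producing the $\delta(k_1)$ — is the same computation, merely repackaged using the integration by parts already recorded in \eqref{thetaexpansion}.
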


\begin{proof}
This can been seen by comparing the previous computation with \eqref{DGLdiffusion2}.
We take the quantity $\int (Y_1 + Y_{21})$ and integrate it with respect to $x^-$, change variables ($\tau = u - s$ and $r = x^- + v^- s$), and apply Plancherel's theorem:
\begin{equation*}
\begin{split}
\int \! \int_0^\tau (Y_1 + Y_{21}) ds dx^-  &= - \! \int \! \int_0^{\hat{\tau}}  \int_0^s (s - u) F^\ep_\alpha(x^- + uv^-) \nabla F^\ep_\alpha (x^- + s v^-) du ds dx^-  \\
& =  \int \int \tau F^\ep_\alpha(r + \tau v^-) \nabla F^\ep_\alpha (r) d\tau d^2 r \\
& =  \frac{1}{2} \int \int \left[ ( \nabla_p \cdot F(r + p\tau) ) F(r) d\tau d^2 r \right]_{ p = v^- } \\
& =  \frac{1}{2}  \nabla_p \int \int e^{i k \cdot p} (k \otimes k) d\tau d^2 k \upharpoonright_{ p = v^- } 
\end{split}
\end{equation*}
\end{proof}

\section{Appendix}

Retaining the same essential argument, the result of \cite{DR} (convergence in expectation of the evolution of the initial distribution $f_0$) can be stretched to include $\alpha \in [1/8, 1/4)$. The estimate that must be improved in order to achieve this is (40) in \cite[Lemma 1]{DR}. This section illustrates how to iterate their geometric method to get a tighter bound on $J^{ii}_{1,\ep}$, the error term estimating the probability of non-consecutive overlappings and recollisions, a bound that decays for $\alpha < 1/4$.  

First fix $n$ disjoint closed subintervals of $(0, \pi)$, each of the form 
\begin{equation*}
I_m := [\phi_m, \phi_m + \pi/2n].
\end{equation*}

We assume that $\phi_1 < \phi_2 < \dotsb < \phi_n$.  For this iterative method to work, $n$ must satisfy $\pi/2n > C \ep^\alpha \geq \abs{ \theta_k }$.  Thus for each $m = 1, \dotsc , n$, there exist $\{ h_m \}_{m = 1}^n$ such that
\begin{equation*}
 \sum_{k=1}^{h_m - 1} \theta_k \in I_m.
\end{equation*}

Our trajectory will have $n$ gaps where the times $t_{h_m}$ can vary.  All other times $t_k$ and all angles $\theta_k$ are fixed. 

The case $n = 2$, $I_1 = [\pi/8, 3 \pi /8]$, and $I_2 = [5 \pi/8, 7 \pi /8]$ is illustrated in figure \ref{iterated}.

% Iteration of DesRi figure
\begin{figure}[htpb]
      \centering
      \includegraphics[viewport=1.3in 0in 7.3in 5in, keepaspectratio, width=5in,clip]{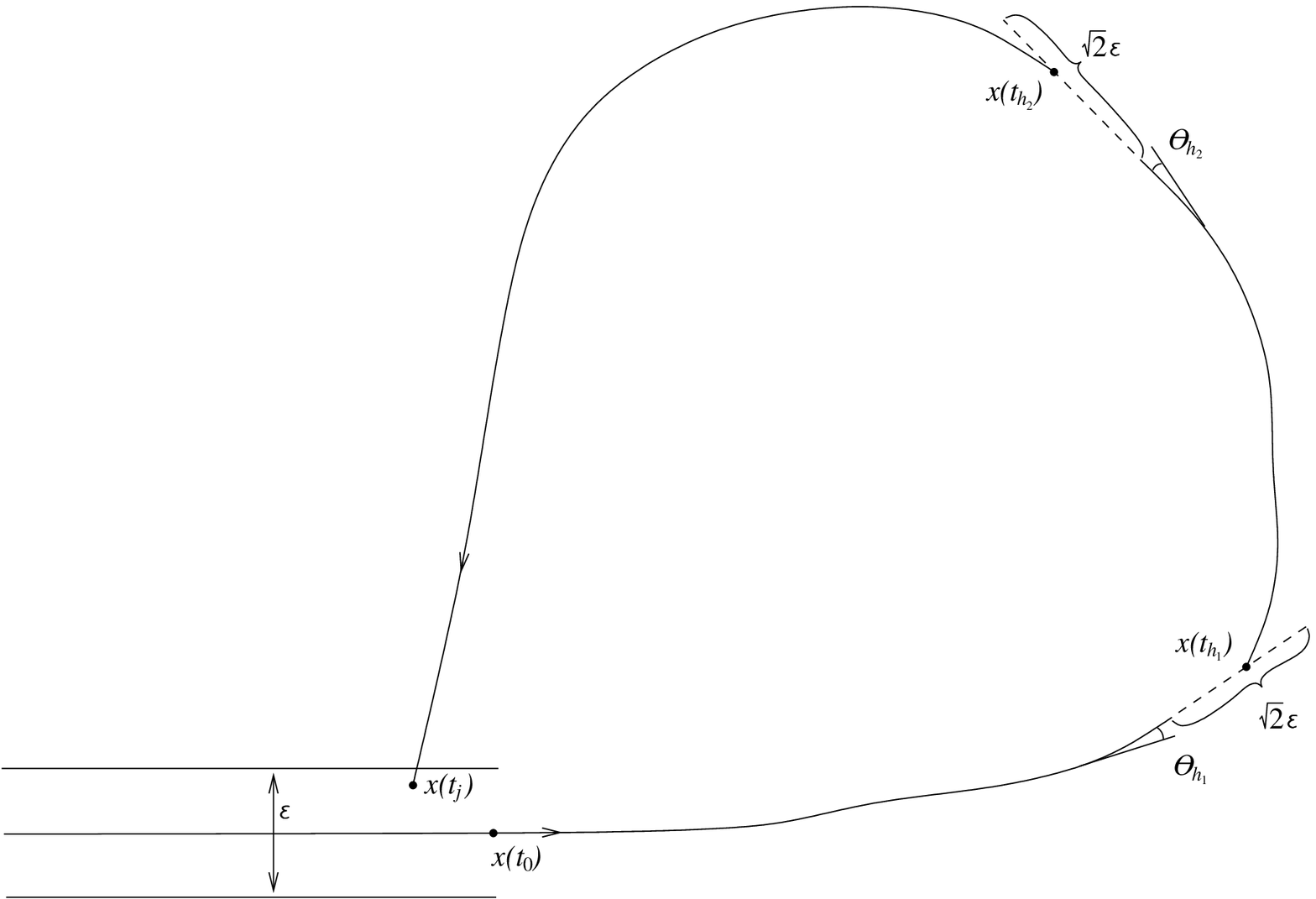}
      \caption{A recollision ($x(t_j)$ lying in the tube of radius $\ep$ around the trajectory) constrains how much $h_1$ and $h_2$ can vary in total; the variation for each is bounded by $\sqrt{2} \ep$.}
            \label{iterated}
\end{figure}

We claim that the recollision condition, $r_j \in \bigcup_{s \in (t_i, t_{i+1})} B(x(s), 2\ep)$, together with each $I_m$ being bounded away from the endpoints of $[0, \pi]$, results in each $t_{h_m}$ taking values in a set whose measure is $O(\ep)$.
The claim can be seen by working backwards from $h_n$:  the height of the trajectory at time $t_{h_n}$ can vary only in an interval of size $\ep$, due to the recollision condition restricting the trajectory to a tube of width $\ep$.  Then the total variation of all the $t_{h_m}$, $m = 1, \dotsc , n$ can be no larger than $C\ep$.  Here the constant $C$ is chosen so that 
\begin{equation*}
\left\lvert \frac{1}{\sin \phi} \right\rvert \leq C, \text{ for } \phi \in \{ \phi_1, \phi_n + \pi/2n \}.
\end{equation*}

Hence the variation of any single $t_{h_m}$ is bounded above by $C\ep$, there being no negative values of $t_{h_m}$ allowed and all angles pointing upwards at the variable times (i.e., being strictly between zero and $\pi$).  Then $J^{ii}_{1,\ep}$ can be estimated like before:

\begin{equation*}
\begin{split}
J^{ii}_{1,\ep} &\leq e^{-2t\ep \rho^\ep_\alpha}  \sum_{Q \geq 1} (\rho^\ep_\alpha)^Q 
\int_0^t dt_1 \dotsi \int_{t_{Q-1}}^t dt_Q \int_{-\ep}^\ep d\rho_1 \dotsi \int_{-\ep}^\ep d\rho_Q  \sum_{i = 0}^{Q-1} \sum_{j = i+2}^{Q}  \sum_{h_1 = i+1}^j \dotsb \\
& \quad \sum_{h_2 = i + 2}^j \dotsb \sum_{h_n = i + n}^j
\prod_{m = 1}^n \mathbf{1} \biggl( \Biggl\{ \sum_{k=1}^{h_m - 1} \theta_k \in I_m \Biggr\} \biggr) \mathbf{1} \biggl( \Biggl\{ \beta_j \in \! \! \bigcup_{s \in (t_i, t_{i+1})} \! \! \! B(\xi(s), 2\ep)\Biggr\} \biggr) \\
& \leq e^{-2t\ep \rho^\ep_\alpha}  \sum_{Q \geq 1} \frac{(2 \ep \rho^\ep_\alpha)^Q}{(Q - n)!} Q^{n+2} t^{Q-1} C \ep^n \\
& \leq C(T) \ep^{ 3n + 2 - (2n + 2) \delta }  = C(T) \ep^{ n - 4(n+1) \alpha}.
\end{split}
\end{equation*}

The exponent is positive if $\alpha < \frac{n}{4(n+1)}$, so taking $n$ to infinity (as $\ep$ goes to zero) gives the result for $\alpha < 1/4$.

\bibliographystyle{amsalpha}

\end{document}